\title[Weighted quartet distance]{Inferring metric trees from weighted quartets via an intertaxon distance}
\author{Samaneh Yourdkhani and John A. Rhodes}
\address{Department of Mathematics and Statistics\\
University of Alaska Fairbanks, 99775}
\email{syourdkhani@alaska.edu,j.rhodes@alaska.edu}
\date{February 6, 2020}
\newcommand{\vertex}{\node[vertex]}
\tikzset{leaf/.style={fill=white,inner sep=1pt}}
\tikzset{label/.style={midway,fill=white,inner sep=2pt}}
\tikzset{vertex/.style={draw,circle,fill=black,inner sep = 1pt, minimum size=4pt}}
\theoremstyle{plain}
\newtheorem{theorem}{Theorem}[section]
\newtheorem{lemma}[theorem]{Lemma}
\theoremstyle{definition}
\newtheorem{example}[theorem]{Example}
\newtheorem{alg}[theorem]{Algorithm}
\theoremstyle{remark}
\newtheorem{rmk}[theorem]{Remark}
\newcommand{\desc}{\operatorname{desc}}
\newcommand{\MRCA}{\operatorname{MRCA}}
\begin{document}

\begin{abstract} A metric phylogenetic tree relating a collection of taxa induces weighted rooted triples and weighted quartets for all subsets of three and four taxa, respectively. 
New intertaxon distances are defined that can be calculated from these weights, and shown to exactly fit the same tree topology, but with edge weights rescaled by certain factors dependent on the associated split size. These distances are analogs for metric trees of similar ones recently introduced for topological trees that are based on induced unweighted rooted triples and quartets. The distances introduced here lead to new statistically consistent methods of inferring a metric species tree from a collection of topological gene trees generated under the multispecies coalescent model of incomplete lineage sorting. Simulations provide insight into their potential.
\end{abstract}

\maketitle


\section{Introduction} 

We introduce new intertaxon distances that are computed for taxa on an unrooted metric phylogenetic tree based on its displayed rooted triples or quartets. The distances depend upon the \emph{weights} --- the lengths of the unique internal edge --- of the rooted triples or quartets.
These distances differ from the original intertaxon distance on the metric tree, but exactly fit the tree topology, allowing standard distance methods to be used to recover the tree from knowledge of only its weighted rooted triples or quartets. If the rooted triple or quartet data is noisy, so that not all are correct, this distance can still be used to estimate the tree. While the tree estimate will have edge lengths estimating those on a remetrized tree, a simple adjustment gives estimates of the original edge lengths.
Thus these distances lead to new distance-based consensus methods for obtaining a large metric tree from a collection of weighted rooted triples or quartets.
In particular they can be used in new statistically consistent methods of metric species tree inference from topological gene trees under the standard model
of incomplete lineage sorting.

\medskip

This final application is, in fact, our motivation for developing these distances.
Statistical inference of a species tree under the multispecies coalescent (MSC) model of incomplete lineage sorting is a fundamental problem in current phylogenetic data analysis.
For large datasets (many taxa, with sequences from many loci) that are increasingly common in empirical studies, the simultaneous inference of gene and species trees by Bayesian  methods \cite{Liu2008,Heled2010} may require excessive computation time. Other methods
proceed by first inferring gene trees for each locus, and then treating these as data for a second inference of the species tree \cite{ASTRID,ASTRALIII}.

This work continues a thread of developments initiated with several methods of this second sort introduced by Liu  and collaborators \cite{Liu2009,Liu2011} for inferring a species tree from a collection of topological gene trees, either rooted or unrooted, under the MSC model.
These methods, called STAR and NJ$_{st}$, proceed by first remetrizing the gene trees in a way that reflects only their topologies, next computing intertaxon distance matrices from each remetrized tree, and then averaging these matrices. Finally, a standard distance method such as Neighbor Joining is used to construct a species tree from this average distance. Despite this seemingly simplistic approach, the methods are statistically consistent under the MSC model \cite{adr2013,adr2018}, and show strong performance in simulation studies \cite{ASTRID}. Moreover, they have been shown to be based on the underlying notions of displayed clades and splits on the gene trees \cite{adr2013,adr2018}. A third method, STEAC \cite{Liu2009}, took a similar averaging approach while retaining metric information on the gene trees. Its statistical consistency, however, requires assumptions on the relationship of gene tree metric units (substitution units) to species tree metric units (coalescent units) which may be difficult to justify.

Motivated by the STAR and NJ$_{st}$ algorithms, the RTDC and QDC methods \cite{Rhodes2019} are based on similar distances defined from displayed topological rooted triples and quartets on gene trees, and give statistically consistent inference of  topological species trees from gene trees under the MSC. Although the use of the quartet and rooted triple distances result in a slower algorithm than the split or clade approaches of STAR and NJ$_{st}$, inference with them is more robust to missing taxa on gene trees, and gives similar performance to, for instance, the highly developed quartet-based inference software ASTRAL. Moreover, the quartet distance has been generalized to the level-1 network setting \cite{ABR19}, playing a key role in the NANUQ method for fast inference of hybridization networks.

While the results presented here are analogs for metric trees of the results for topological trees of \cite{Rhodes2019}, the remetrizations we develop are genuinely new, and not simple extensions of the topological quartet and rooted triple ones.
Moreover, since the weights in coalescent units of rooted triples and quartets can be inferred from \emph{topological} gene tree data under the MSC, one can estimate these new intertaxon distances on a species tree from topological gene trees alone.
Thus from the same gene tree data considered in  \cite{Rhodes2019}, one obtains not only an estimate of the topology of the species tree, but a metric estimate as well.
While the ability to infer a metric species tree is thus similar to STEAC's, the approach introduced here crucially uses no \emph{metric} gene tree information, and thus its consistency does not depend on any assumptions of the relationship of metric units on gene trees and the species tree. It is thus statistically consistent under much broader assumptions.
Although the limited simulation results  we present indicate that further work will be necessary to produce algorithms competitive with other approaches, these distances provide new tools for understanding how information on a species tree can be extracted from the gene trees.

\smallskip
Although we position this work in the context of species tree inference, the basic problem of inferring a tree from weighted quartets is not new.
Characterizations of those weighted quartet systems that define a metric tree have been given for both binary \cite{Dress2003} and and non-binary \cite{Grun08} trees, in settings where all weights are known exactly. The weighted quartet distance defined here offers advantages in any setting where there may be noise in the weights, and an exact fit to a single tree is not possible. Then any of the many methods of fitting a tree to a distance matrix may be applied for an approximate solution.

\medskip 

The remainder of this paper proceeds as follows. After introducing notation and definitions in Section \ref{sec:def}, the weighted rooted triple metrization and its associated distance  is developed in Section \ref{sec:WRT}. Section \ref{sec:WQ} develops the analogs for weighted quartets. Several algorithms using these distances for the inference of a tree from its displayed quartets or a collection of gene trees are formalized in Section \ref{sec:WQDS}. Finally, Section \ref{sec:sims} presents some preliminary simulation results, and discusses some of the practical issues of using these distance for inference.

Implementations of the quartet versions of the algorithms developed and used in this paper are available in the R package \texttt{MSCquartets} \cite{MSCquartets}.


\section{Background and Notation}\label{sec:def}
By a \emph{rooted topological phylogenetic tree} $T^r$ on $X$ we mean a rooted tree whose root has degree $\ge 2$ and all other  internal nodes have degree $\ge 3$, with leaves bijectively labelled by elements of the finite taxon set $X$. Directing edges away from the root, we have an ancestral partial order on the nodes, with the root ancestral to all others.

A \emph{rooted metric phylogenetic tree} $(T^r,\lambda^r)$ on $X$ is a rooted topological tree together with a function $\lambda^r$ which assigns non-negative weights, or \emph{edge lengths}, to all edges of $T^r$.  We use $T$ and $(T,\lambda)$ to denote the unrooted topological and metric species trees obtained from $T^r$ and $(T^r,\lambda^r)$ in the obvious way, by suppressing the root node if it has degree 2, and undirecting edges.

The \emph{most recent common ancestor} of taxa $x,y\in X$ on a rooted tree $T^r$ is  a the minimal node ancestral to both, denoted $\MRCA(x,y)$. By the \emph{descendants} of a node $v$, denoted $\desc(v)$, we mean the subset of $X$ labelling leaves that have $v$ as an ancestor.

When considering the \emph{multispecies coalescent model} (MSC) \cite{PamiloNei1988}, we denote its species tree parameter by $(\sigma^r,\lambda^r)$. Edge lengths on a species tree are measured in \emph{coalescent units}, which are units of time (in generations) inversely scaled by population size, so that the rate of coalescence of two gene lineages in an edge (i.e., population)
 on the species tree is normalized to 1.
 Such a parameter determines a probability distribution on rooted and unrooted topological gene trees on $X$, which we denote as $T^r$ or $T$. Under
 the MSC non-binary topological gene trees have probability 0 even when the species tree is non-binary. Assuming one gene lineage is sampled for each taxon in $X$, the topological tree $\sigma^r$ and the edge lengths
 $\lambda^r(e)$ for all internal edges $e$ on $\sigma^r$ are identifiable from the distribution of rooted topological gene trees $T^r$, although lengths of pendant edges on $\sigma^r$ are not. In fact, $\sigma^r$ and $\lambda^r(e)$ are identifiable for internal edges $e$ even from the distribution of unrooted topological gene trees $T$ when $|X|\ge 5$. However, if $|X|=4$ only the unrooted  $\sigma$ and its one internal edge length are identifiable \cite{adr2011a}.

\medskip

A resolved \emph{rooted triple} is a 3-taxon rooted tree, denoted by $ab|c=ba|c$ where the taxa $a,b$ form a clade. The unresolved rooted triple, a star tree on $a,b,c$ is denoted $abc$. A rooted tree $\sigma^r$ or $T^r$ on $X$ \emph{displays}  the rooted triples it induces on 3-taxon subsets of $X$.
A \emph{weighted rooted triple} is a pair of a rooted triple together with a weight, a non-negative real number. We view the weight for a resolved rooted triple as a length for the single internal edge of the triple, and allow a  weight of zero only if the rooted triple is unresolved.
A rooted triple $ab|c$ is said to \emph{separate} the pair $a$ and $c$, as well as the pair $a$ and $b$. An unresolved rooted triple does not separate any pairs of taxa on it. The set of rooted triples on $X$ separating taxa $a,b$ is denoted $\mathcal{RT}_{ab}$, and the subset of these rooted triples displayed on $T^r$ by $\mathcal{RT}_{ab}(T^r)$. 

Similarly, a resolved \emph{quartet} is a 4-taxon unrooted tree, denoted by $ab|cd=ba|cd=ab|dc=ba|dc$ where the taxa $a,b$ and $c,d$ form cherries. The unresolved quartet, a star tree on $a,b,c,d$ is denoted $abcd$. An unrooted tree $\sigma$ or $T$ \emph{displays}  the quartets it induces on 4-taxon subsets.
A \emph{weighted quartet} is a pair of a quartet together with a weight, a non-negative real number. We view the weight for a resolved quartet as a length for the single internal edge of the quartet tree, and only allow the weight 0 for the unresolved quartet.
A quartet $ab|cd$ is said to \emph{separate} the taxon pair $a$ and $c$, as well as the pairs $a,d$ and $b,c$ and $b,d$. An unresolved quartet does not separate any pairs of taxa on it. 
The set of quartets on $X$ separating taxa $a,b$ is denoted $\mathcal{Q}_{ab}$, and the subset of these quartets displayed on $T$ by $\mathcal{Q}_{ab}(T)$.

\medskip

Any metric tree $(T^r,\lambda^r)$ or $(T,\lambda)$ on $X$ induces a metric $d_\lambda$ on $X$, using the sum of edge weights along paths between the taxa. As is well known, however, a metric $d$ on $X$ need not arise from such a weighting. If $d=d_\lambda$ for some $\lambda$ on $T$, then we say $d$ is a \emph{tree metric} on $T$ with weighting $\lambda$.

For nodes $v$ and $w$ on $T$, define $P_{v,w}=\{e_1,e_2,\dots,e_k\}$ to be the path from $v$ to $w$ on $T$. For a rooted tree $T^r$, we use the same notation for the set of edges which forms a path from $v$ to $w$ when undirected. 


\section{Weighted rooted triple metrization of a rooted tree}\label{sec:WRT}

Given a rooted metric tree, we introduce a remetrization of the tree, so that internal edge lengths become a product of their original lengths and an integer factor dependent on the placement of the edge in the topological tree. Although this introduces no new information, the value of doing this, which will be developed in later section, is to enable an algorithmic approach to inferring a metric tree from its weighted rooted triples, even in the presence of noise. The key theoretical underpinning of this is Theorem \ref{thm:WRTmain} of this section.

\medskip

Let $\left(T^r,\lambda^r\right)$  be a rooted  metric phylogenetic tree on $X$. For any vertex $v$ on $T^r$, denote by $n(v)$ the number of taxa in $X$ which are \emph{not} descendants of $v$. We remetrize $T^r$ to obtain a new metric tree $\left(T^r,\tilde{\lambda}^r\right)$ as follows: First for each internal edge $e=(u,v)$ with $u$ the parent of $v$ let 
\begin{equation}
\tilde{\lambda}^r(e)= \lambda^r(e)\cdot n(v).\label{eq:WRTmult}
\end{equation} 
Then assign pendant edge lengths in such a way that the tree becomes ultrametric (i.e. all root-to-leaf distance are equal). To do this, we choose any number $M$ greater than the remetrized length of every path of internal edges from the root to any other internal vertex, and to a pendant edge $e=(u,v)$ we assign length
$$ \tilde \lambda^r(e)=M-\sum_{e \in P_{r,u}} \tilde{\lambda}^r(e)>0
$$
The precise value of $M$ will not matter in what follows so we assume some choice has been made and fixed. We refer to this remetrization as the \emph{weighted rooted triple metrization}, due to Theorem \ref{thm:WRTmain} below.

To further elucidate the need for a choice of $M$, for $x,y\in X$ let
$$f_{\tilde{\lambda}^r}(x,y)=\sum_{e \in P_{r, \MRCA(x,y)}} \tilde{\lambda}^r(e).
$$
Then $-f_{\tilde{\lambda}^r}(x,y)$ is the Gromov product (essentially the Farris transform) \cite{DHM2007} 
 associated to $d_{\tilde{\lambda}^r}(x,y)=2\left(M-f_{\tilde{\lambda}^r}(x,y)\right)$. For $x\ne y$, the Gromov product is independent of the choice of $M$, but carries all information on the topology of the tree and its internal edge lengths. However, for tree building it is convenient to pass to a tree metric, which requires a choice of $M$. Nonetheless, the Gromov product and the tree metric are essentially interchangable notions.

We now show the intertaxon distance $d_{\tilde \lambda^r}$ associated to the weighted rooted triple metrization can also be expressed in terms of information on rooted triple trees induced from $T^r$. 
For a fixed tree $(T^r,\lambda^r)$ on $X$ displaying  a rooted triple $xy|z$, let $w(xy|z)=w_{\lambda^r}(xy|z)$ denote the length of the internal edge on the induced metric tree on $x,y,z$, which we call the \emph{weight} of $xy|z$.

\begin{theorem}\label{thm:WRTmain} 
	Suppose a rooted metric phylogenetic tree $\left(T^r,\lambda^r\right)$ is given the rooted triple remetrization, $\left(T^r,\tilde{\lambda}^r\right)$. Then for all $x,y\in X$, $x\neq y$,
	$$d_{\tilde{\lambda}^r}(x,y)=2\left(M-\sum_{ xy|z\text{ on }T^r} w_{\lambda^r}(xy|z)\right),$$ 
	where the sum is over all $z\in X$ such that $xy|z$ is displayed on $T^r$.
\end{theorem}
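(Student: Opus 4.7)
The plan is to invoke the pre-theorem identity $d_{\tilde\lambda^r}(x,y) = 2\bigl(M - f_{\tilde\lambda^r}(x,y)\bigr)$ and thereby reduce the claim to the equality
$$f_{\tilde\lambda^r}(x,y) \;=\; \sum_{xy|z\text{ on }T^r} w_{\lambda^r}(xy|z).$$
Set $m = \MRCA(x,y)$. First I would note that the rooted triple on $\{x,y,z\}$ is $xy|z$ exactly when $z \notin \desc(m)$: in that case $\MRCA(x,y,z) = \MRCA(m,z)$ lies strictly above $m$ on the path from the root to $m$, the triple is resolved with $x,y$ as a cherry, and $w_{\lambda^r}(xy|z)$ equals the sum of $\lambda^r(e)$ over the edges of the path from $m$ up to $\MRCA(x,y,z)$.

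Next, I would parametrize that path as $r = u_0 \to u_1 \to \cdots \to u_k = m$ with edges $e_i = (u_{i-1}, u_i)$. For each $z$ with $z \notin \desc(m)$, let $j(z)$ be the largest index such that $u_{j(z)}$ is an ancestor of $z$; then $\MRCA(x,y,z) = u_{j(z)}$ and
$$w_{\lambda^r}(xy|z) \;=\; \sum_{i = j(z)+1}^{k} \lambda^r(e_i).$$

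The core step is to swap the order of summation:
$$\sum_{xy|z\text{ on }T^r} w_{\lambda^r}(xy|z) \;=\; \sum_{i=1}^{k} \lambda^r(e_i)\cdot\#\{z \in X : j(z) < i\}.$$
The inequality $j(z) < i$ is equivalent to $u_i$ not being an ancestor of $z$, i.e., $z \notin \desc(u_i)$. Since $x$, $y$, and every element of $\desc(m)$ lie in $\desc(u_i)$, this condition automatically excludes $z \in \{x,y\}$ and every $z \in \desc(m)$, so the count equals exactly $n(u_i)$. By the defining relation \eqref{eq:WRTmult}, $\lambda^r(e_i)\cdot n(u_i) = \tilde\lambda^r(e_i)$, and summing over $i = 1,\dots,k$ yields $f_{\tilde\lambda^r}(x,y)$, which closes the argument.

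The main obstacle I anticipate is simply the bookkeeping in the swap: carefully verifying that the equivalence $j(z) < i \iff z \notin \desc(u_i)$ produces \emph{precisely} the multiplicity $n(u_i)$ (so that the remetrized edge length pops out on the nose, with no off-by-one in the exclusion of $x$, $y$, or descendants of $m$), and checking the degenerate case $m = r$, where the root-to-$m$ path is empty, no triple of the form $xy|z$ is displayed, and both sides of the reduced identity vanish.
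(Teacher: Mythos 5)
Your proposal is correct and follows essentially the same route as the paper's proof: reduce to the identity $f_{\tilde\lambda^r}(x,y)=\sum_{xy|z} w_{\lambda^r}(xy|z)$, express each weight as a sum of $\lambda^r(e_i)$ over the edges between $\MRCA(x,y,z)$ and $\MRCA(x,y)$, and interchange the order of summation so that each edge $e_i$ picks up the multiplicity $n(v_i)$. The paper organizes the swap by first grouping the taxa $z$ according to the subtree in which they attach to the root-to-$\MRCA(x,y)$ path and then rearranging the resulting explicit double sum, but this is the same counting argument you carry out with the index $j(z)$.
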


\begin{figure}\begin{center}
		\begin{tikzpicture}
		\def\sc{1}

		\def\vdist{\sc*1}; 
		\def\ldist{\sc*0.5};

		\foreach \i in {1,...,6} {
			\coordinate (v\i) at (\sc*\i,0);
		}

		\foreach[count = \i] \a in {$x$ , $y$ , $K_n$ , $K_3$  , $K_2$ , $K_1$ } {
			\node[below = 0.2 cm of v\i] {\a};
		}

		\coordinate[above right=\vdist and \ldist of v1] (a);
		\coordinate[above right=\vdist and \ldist of a] (b);
		\coordinate[above right=\vdist and \ldist of b] (e);
		\coordinate[above right=\vdist and \ldist of e] (f);
		\coordinate[above right=\vdist and \ldist of v5] (c);
		\coordinate[above left=2*\vdist and 2*\ldist of c] (d);
		\coordinate[above left=2*\vdist and 2*\ldist of d] (root);

		\foreach \i/\j in {a/v1, a/v2, c/v6, b/v3, e/v4, f/v5, d/c, root/e, root/d, a/b} {
			\draw[thick] (\i) -- (\j);
		}
		\foreach \i/\j in {e/b} {
		\draw[dashed] (\i) -- (\j);
	}

	\node[above left=0.01 of b] {$v_{n-1}$};
	\node[above left=0.01 of a] {$v_{n}$};
	\node[above left=0.01 of e] {$v_{2}$};
	\node[above left=0.01 of f] {$v_{1}$};
	\node[above=0.01 of root] {$v_{0}$};
		\end{tikzpicture}
	\end{center}

	\caption{An $N$-taxon binary  tree with root $v_0$ and $v_n=\MRCA(x,y)$. The $K_i$ are subtrees, on $k_i$ taxa.}\label{fig:RTdistarg}
		\end{figure}
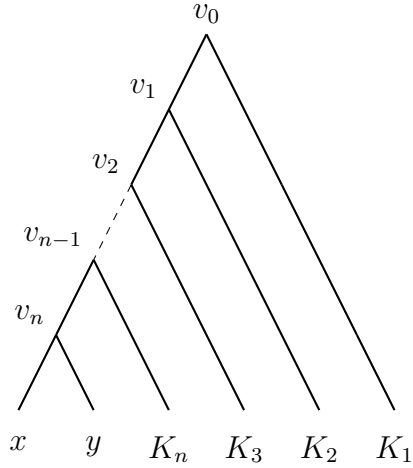

\begin{proof}
	With $v=\MRCA(x,y)$ let $r=v_0, v_1,v_2,\dots,v_n=v$ be the ordered nodes on the path on $T^r$ from the root  $r$ to $v$, as shown in Figure \ref{fig:RTdistarg}. Let $$k_i=\left| \desc(v_{i-1})\right| -  \left|\desc(v_{i})\right|,$$ 
the drop in number of decsendents from $v_{i-1}$ to $v_i$.
	
	For edge $e_i=(v_{i-1},v_i)$, let $\lambda_i=\lambda(e_i)$. Then
	\begin{equation}\label{eq:RTform}
	\sum_{xy|z\text{ on }T^r} w_{\lambda^r}(xy|z)=\lambda_nk_n+(\lambda_n+\lambda_{n-1})k_{n-1}+\cdots+(\lambda_n+\lambda_{n-1}+\cdots+\lambda_1)k_1.
	\end{equation}
For instance the term $\lambda_nk_n$ on the right side arises because, as can be seen in Figure \ref{fig:RTdistarg}, there are $k_n$ rooted triple trees $xy|z$, one for each $z$ on the subtree $K_n$, whose internal edge length is $\lambda_n$. While Figure \ref{fig:RTdistarg} depicts no polytomies at the $v_i$, the formula is valid even if there are.

Rearranging equation \eqref{eq:RTform} gives 
	\begin{align*}
		\sum_{xy|z\text{ on }T^r} w(xy|z)&=\lambda_n(k_n+k_{n-1}+\cdots+k_1)+\lambda_{n-1}(k_{n-1}+\cdots+k_1)+\cdots+\lambda_1k_1\\
		&=\lambda_n\cdot n(v_n)+\lambda_{n-1}\cdot n(v_{n-1})+\cdots+\lambda_1\cdot n(v_1)\\
		&=f_{\tilde{\lambda}^r}(x,y).
	\end{align*}
Then by definition of $d_{\tilde{\lambda}}(x,y)$, we have 
	$$d_{\tilde\lambda^r}(x,y)=2\left(M-f_{\tilde{\lambda^r}}(x,y)\right)=2\left(M-\sum_{xy|z\text{ on }T^r} w(xy|z)\right),$$
as claimed.	 \end{proof}

\begin{example}
Consider a binary rooted caterpillar tree $\left(T^r,\lambda^r \right)$ on $N$ taxa
$$(\dots(((a_1,a_2):\lambda_{N-2},a_3):\lambda_{N-3},a_4),\dots,a_{N-1}):\lambda_1,a_N)
$$
 with the internal edges of weight $\lambda_1, \lambda_2,\dots,\lambda_{N-2}$ from the root toward the cherry.  Under the rooted triple metrization, for each $a_i, a_j$, $1\leq i<j\leq N$,
\begin{align*}
f_{\tilde\lambda^r}(a_i,a_j)&=\sum_{e=(v,w) \in P_{r, \MRCA(a_i,a_j) }} \lambda^r(e) \cdot n(w)\\
&=\lambda_1+2\lambda_2+\cdots +(N-j)\lambda_{N-j}.
\end{align*}
Also,
\begin{align*}
\sum_{ a_i a_j|b \ \text{on} \ T^r} w(a_i a_j|b)&=
\lambda_{N-j}+\left(\lambda_{N-j}+\lambda_{N-(j+1)}\right)+\cdots+\left(\lambda_{N-j}+\lambda_{N-(j+1)}+\cdots+\lambda_1\right)\\
&=\lambda_1+2\lambda_2+\cdots +(N-j)\lambda_{N-j},
\end{align*}
where the terms arise from considering, in order, $b= a_{j+1}, a_{j+2}, \dots, a_1$. Thus
$f_{\tilde\lambda^r}(a_i,a_j)=\sum_{a_i a_j|b \ \text{on} \ T^r} w(a_i a_j|b)$, as Theorem $2.1$ showed more generally. 
\end{example}

\begin{example}
Let $N=2^m$ and $T^r$ be a binary rooted balanced tree
$$(\dots((a_1,a_2),(a_3,a_4)),\dots,((a_{N-3},a_{N-2}),(a_{N-1},a_{N}))\dots)
$$
on $N$ taxa. Suppose $T^r$ is given an equidistant metric $\lambda^r$ where as one moves from the root toward any leaf the internal edge weights are in order $\lambda_1,\lambda_2,\dots,\lambda_{m-1}$. Then edge lengths for $\left(T^r,\tilde\lambda^r\right)$ are 
$$\tilde\lambda_1=\lambda_1\frac{N}{2},\ \tilde\lambda_2=\lambda_2\frac{3N}{4},\ \tilde\lambda_3=\lambda_3\frac{7N}{8},\ \dots.$$ 
Also, if the $\MRCA(a_i,a_j)$ is the child vertex of an edge of length $\lambda_k$, then
\begin{align*}
f_{\tilde\lambda^r}(a_i,a_j)&=\sum_{e=(v,w) \in P_{r, \MRCA(a_i,a_j) }} \lambda^r(e) \cdot n(w)\\
&=\frac{N}{2}\lambda_1+\frac{3N}{4}\lambda_{2}+\cdots+N\left(1-\frac{1}{2^k}\right)\lambda_k.
\end{align*}
But also
\begin{align*}
\sum_{a_i a_j|b \ \text{on} \ T^r} w(a_i a_j|b)&=
\underbrace{(\lambda_k+\cdots+\lambda_1)+\cdots+(\lambda_k+\cdots+\lambda_1)}_\text{$\frac N2$ times}\\
&+\underbrace{(\lambda_k+\cdots+\lambda_2)+\cdots+(\lambda_k+\cdots+\lambda_2)}_\text{$\frac N4$ times} \\
&+\cdots+\underbrace{\lambda_k+\cdots +\lambda_k}_\text{$\frac N{2^k}$ times}\\
&=\frac{N}{2}\lambda_1+\left(\frac{N}{2}+\frac{N}{4}\right)\lambda_{2}+\cdots+\left(\frac{N}{2}+\frac{N}{4}+\cdots+\frac N{2^k}\right)\lambda_k.
\end{align*}
Then $f_{\tilde{l}}(a_i,a_j)=\sum_{a_i a_j|b \ \text{on} \ T^r} w(a_i a_j|b)$ as Theorem $2.1$ demonstrated more generally.
\end{example}


\section{Weighted Quartet metrization of an unrooted tree}\label{sec:WQ}
For an unrooted metric tree, we define a remetrization similar to that of the last section, using weighted quarets.

Let $(T,\lambda)$ be an unrooted metric tree on taxa $X$ with $\lambda(e)$ the length of edge $e$. Each edge $e$ of $T$ determines a split (bipartition) of $X$, $X=M_e\sqcup N_e$, according to the taxa on the connected components of the graph resulting from deleting $e$. We remetrize $T$ by assigning to each internal edge $e$ length $$\tilde{\lambda}(e)=\left(|M_e|-1\right)\left(|N_e|-1\right)\lambda(e),$$ and to pendant edges  $e$ length $\tilde{\lambda}(e)=1$. This gives a new metric tree $\left(T,\tilde{\lambda}\right)$, which we refer to as having the \emph{weighted quartet metrization}, due to Theorem \ref{thm:WQmain} below.
The distance between $x$ and $y$ on the remetrized tree is
$$d_{\tilde{\lambda}}(x,y)=2+\sum_{e\in P_{x,y}} \left(|M_e|-1\right)\left(|N_e|-1\right)\lambda(e).$$

We will show this intertaxon distance can also be expressed in terms of information from quartet trees induced from $T$. As a first step, for a quartet $Q$ let $E(Q)$
denote the set of edges on the path in $T$ which induces the internal edge of the quartet tree, and $N(e;x,y)$ be the number of quartets $Q\in\mathcal{ Q}_{x,y}$ for which $e\in E(Q)$. Then
\begin{equation}N(e;x,y)=\sum_{{Q\in\mathcal{Q}_{x,y}} \atop {e\in E(Q)}}1.\label{eq:Ndef}
\end{equation}

\begin{lemma}
Let $T$ be an unrooted metric phylogenetic tree on taxa $X$. Then for all $x,y\in X$, $x\neq y$, and internal edges $e\in P_{x,y}$
\begin{equation}
N(e;x,y)=\left(|M_e|-1\right)\left(|N_e|-1\right) 
\end{equation}

\end{lemma}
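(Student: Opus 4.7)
The plan is to interpret the count combinatorially by examining, for each candidate quartet, which side of the split $X = M_e \sqcup N_e$ its taxa fall on. Since $e$ lies on $P_{x,y}$, removing $e$ separates $x$ from $y$; without loss of generality set $x \in M_e$ and $y \in N_e$.

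First I would fix a 4-taxon subset $\{x,y,a,b\}$ with $a,b \in X \setminus \{x,y\}$ distinct, and argue that the quartet $Q$ induced by $T$ on this subset is exactly the quartet whose internal edge realizes the split of $\{x,y,a,b\}$ induced by the edges of $T$ on the Steiner subtree. Thus for an internal edge $e$ of $T$ lying on this Steiner subtree, $e$ belongs to $E(Q)$ precisely when deleting $e$ partitions $\{x,y,a,b\}$ into two pairs (a $2$--$2$ split), rather than as $3$--$1$. This is the key conceptual step: $E(Q)$ is exactly the set of edges of $T$ whose removal realizes the nontrivial bipartition of the four leaves.

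Next I would apply this to our edge $e$. Because $x \in M_e$ and $y \in N_e$, deletion of $e$ splits $\{x,y,a,b\}$ as $(\{x\} \cup (\{a,b\} \cap M_e)) \mid (\{y\} \cup (\{a,b\} \cap N_e))$. This is a $2$--$2$ split iff exactly one of $a,b$ lies in $M_e$ and the other in $N_e$; in that case, moreover, the induced quartet has the form $xa'|yb'$ with $a' \in M_e$, $b' \in N_e$, which in particular separates $x$ and $y$. Conversely, if either both of $a,b$ lie in $M_e$ or both in $N_e$, the split is $3$--$1$ and $e$ contributes to a pendant edge of $Q$, so $e \notin E(Q)$ (and whether $Q$ separates $x,y$ is immaterial).

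Finally I would count: the quartets in $\mathcal{Q}_{x,y}$ containing $e$ in their induced internal-edge path are in bijection with the choices of one taxon from $M_e \setminus \{x\}$ paired with one taxon from $N_e \setminus \{y\}$. The number of such pairs is $(|M_e|-1)(|N_e|-1)$, yielding the desired formula. The only subtle point---which I would state carefully but not belabor---is the identification of $E(Q)$ with the set of $T$-edges inducing a $2$--$2$ split of the four taxa; everything else is straightforward bookkeeping once the right edges of $T$ are identified with the internal edge of the displayed quartet.
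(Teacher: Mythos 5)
Your proposal is correct and takes essentially the same approach as the paper: both count the quartets in $\mathcal{Q}_{x,y}$ whose internal path contains $e$ by choosing one additional taxon from each side of the split $M_e|N_e$, giving $(|M_e|-1)(|N_e|-1)$. Your packaging of this as a direct bijection via the ``$e\in E(Q)$ iff $e$ induces a $2$--$2$ split'' criterion is slightly cleaner than the paper's enumeration over subtrees hanging off the path, and absorbs the polytomy case that the paper handles with a separate remark.
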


\begin{figure}[h]
	
	\begin{center}
		\begin{tikzpicture}[node distance=2cm]
		\coordinate[] (cint) {};
		\coordinate[below left=0.5cm and 1cm of cint] (lint) {};
		\coordinate[below right=0.5cm and 1cm of cint] (rint) {};
		\coordinate[above left=0.5cm and 0.8cm of lint] (c) {};
		\coordinate[below =0.9cm and 0.4cm of lint] (d) {};
		\coordinate[above=0.9cm of cint] (a) {};
		\coordinate[above right=0.5cm and 0.8cm of rint] (b) {};
		\coordinate[below =0.9cm and 0.4cm of rint] (e) {};
		\coordinate[above=0.9cm of b] (f) {};
		\coordinate[above=0.9cm of c] (g) {};
		\coordinate[below right=0.5cm and 1cm of b] (h) {};
		\coordinate[below left=0.5cm and 1cm of c] (i) {};
		\coordinate[above right=0.5cm and 0.8cm of h] (j) {};
		\coordinate[below =0.9cm and 0.4cm of h] (k) {};
		\coordinate[below =0.9cm and 0.4cm of i] (l) {};
		\coordinate[above left=0.5cm and 0.8cm of i] (m) {};
		
		\draw (d) -- (lint)  (lint) -- (c)  (cint) -- (rint)  (cint) -- (a) (rint) -- (e) (b)-- (f) (c)-- (g) (b)-- (h) (c)-- (i) (h)-- (j) (h) -- (k) (i)-- (l) (i)-- (m);
		
			\foreach \i/\j in {lint/cint, rint/b} {
			\draw[dashed] (\i) -- (\j);
		}
		
		\begin{scope}[node distance=0.3cm]
		\node[right of=j] {$y$};
		\node[below of=k] {$K_{n}$};
		\node[above of=f] {$K_{n-1}$};
		\node[left of=m] {$x$};
		\node[below of=l] {$K_1$};
		\node[above of=g] {$K_2$};
		\node[below of=d] {$K_{3}$};
		\node[above of=a] {$K_{j}$};
		\node[below right of=e] {$K_{j+1}$};
		\end{scope}

		\path[every node/.style={sloped,anchor=south,auto=false}]
		(m) edge              node {$e_1$} (i)            
		(i) edge              node {$e_2$} (c)
		(c) edge              node {$e_3$} (lint)
		(cint) edge              node {$e_{j+1}$} (rint)
		(b) edge              node {$e_n$} (h)
		(h) edge              node {$e_{n+1}$} (j)
		;
		\end{tikzpicture}
	\end{center}

\caption{The path between taxa $x$ and $y$ on an $N$-taxon unrooted binary metric tree. The $K_i$ represent subtrees.}\label{fig:qmetric}

\end{figure}
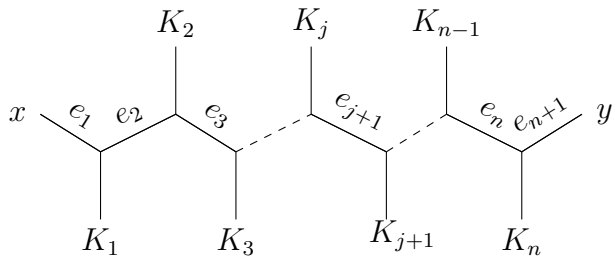

\begin{proof}
Let $P_{x,y}=\{e_1,\dots,e_{n+1}\}$ and $M_i|N_i$ be the split on $T$ associated to $e_i$. If the path from $x$ to $y$ contains no polytomies, from Figure \ref{fig:qmetric} we see by equation \eqref{eq:Ndef} that if $k_i$ denotes the number of taxa on the subtree $K_i$ then 
\begin{align*}
N(e_2;x,y)&=k_1k_2+k_1k_3+\cdots+ k_1k_n=k_1 \left(k_2+k_3+\cdots k_n \right)\\
N(e_3;x,y)&=(k_1k_3+\cdots+k_1k_n)+(k_2k_3++\cdots+k_2k_n)\\
&=\left(k_1+k_2\right)\left(k_3+\cdots +k_n\right),
\end{align*}
and more generally 
$$N(e_i;x,y)=\left(\sum_{j=1}^{i}k_j\right)\left(\sum_{j=i+1}^{k+1}k_j\right)=\left(|M_{e_i}|-1\right)\left(|N_{e_i}|-1\right),
$$
as claimed. If there are polytomies along the path from $x$ to $y$, one readily sees the same formula applies.
\end{proof}

For a fixed tree $(T,\lambda)$ on $X$ displaying  a quartet $Q=xy|zv$, let $w(Q)=w_{\lambda}(Q)$ denote the length of the internal edge on the induced metric tree on $x,y,z,v$, which we call the \emph{weight} of $Q$.

\begin{theorem}\label{thm:WQmain}
Let $(T,\lambda)$ be an unrooted, binary metric tree on $X$ with $x,y\in X$. Then $$d_{\tilde{\lambda}}(x,y)=2+\sum_{Q\in\mathcal{Q}_{x,y}} w(Q).$$
\end{theorem}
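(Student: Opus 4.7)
The plan is to start from the definition of $d_{\tilde\lambda}(x,y)$ displayed just after the weighted quartet metrization is introduced, rewrite the coefficient of each $\lambda(e)$ using the preceding lemma, and then reinterpret the resulting double sum as a sum of quartet weights by exchanging the order of summation.

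First, I would observe that
$$d_{\tilde\lambda}(x,y)=2+\sum_{e\in P_{x,y}}(|M_e|-1)(|N_e|-1)\lambda(e),$$
and that for pendant edges $e \in P_{x,y}$ (i.e., the two edges at $x$ and at $y$), one of $|M_e|$ or $|N_e|$ equals $1$, so these terms vanish. Hence the sum is effectively over the internal edges of $P_{x,y}$, and the $+2$ accounts for the two unit-length pendant edges under the remetrization. For each such internal $e$, the lemma gives $(|M_e|-1)(|N_e|-1)=N(e;x,y)$, so
$$d_{\tilde\lambda}(x,y)=2+\sum_{\substack{e\in P_{x,y}\\ e\text{ internal}}}N(e;x,y)\,\lambda(e).$$

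Next I would use the definition of $N(e;x,y)$ in equation \eqref{eq:Ndef} to write
$$\sum_{\substack{e\in P_{x,y}\\ e\text{ internal}}}N(e;x,y)\,\lambda(e)=\sum_{\substack{e\in P_{x,y}\\ e\text{ internal}}}\sum_{\substack{Q\in\mathcal{Q}_{x,y}\\ e\in E(Q)}}\lambda(e),$$
and then swap the order of summation. Note that for any $Q\in\mathcal{Q}_{x,y}$, every edge of $E(Q)$ lies on $P_{x,y}$ and is internal to $T$ (since $Q$ separates $x$ from $y$, the path inducing the internal edge of $Q$ must cross each edge of $P_{x,y}$ that separates $\{x,y\}$ accordingly). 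The double sum therefore becomes
$$\sum_{Q\in\mathcal{Q}_{x,y}}\sum_{e\in E(Q)}\lambda(e).$$

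The final step is to identify the inner sum with the quartet weight. Since $(T,\lambda)$ restricted to the four taxa of $Q$ produces a metric quartet tree whose internal edge corresponds (after suppressing degree-two vertices) to the path $E(Q)$ in $T$, the length of that internal edge is precisely $\sum_{e\in E(Q)}\lambda(e)=w(Q)$. Substituting gives
$$d_{\tilde\lambda}(x,y)=2+\sum_{Q\in\mathcal{Q}_{x,y}}w(Q),$$
as desired. The only delicate point is the bookkeeping at step two, confirming that $E(Q)\subseteq P_{x,y}$ and that the interchange of summation is legitimate; once this is in hand, the lemma does the real work and the rest is a routine rearrangement.
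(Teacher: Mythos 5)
Your proposal is correct and follows essentially the same route as the paper's proof: it rests on the identity $w(Q)=\sum_{e\in E(Q)}\lambda(e)$, the interchange of the double sum over edges and quartets, and the lemma identifying $N(e;x,y)$ with $(|M_e|-1)(|N_e|-1)$, merely run in the reverse direction (from the distance to the quartet sum rather than the other way around). Your extra remark that the pendant-edge terms vanish and are absorbed into the $+2$ is a small but welcome piece of bookkeeping that the paper leaves implicit.
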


\begin{proof}
By definition of $w(Q)$, we have $w(Q)=\sum_{e\in E(Q)} \lambda(e)$. Then
\begin{align*}
2+\sum_{Q\in\mathcal{Q}_{x,y}} w(Q)&=2+\sum_{Q\in\mathcal{Q}_{x,y}}\sum_{e\in E(Q)} \lambda(e)\\
&=2+\sum_{e\in P_{x,y}} \lambda(e)\sum_{{Q\in\mathcal{Q}_{x,y}} \atop {e\in E(Q)}}1\\
&=2+\sum_{e\in P_{x,y}} \lambda(e)N(e;x,y) & \text{by equation \eqref{eq:Ndef},}\\
&=2+\sum_{e\in P_{x,y}}\lambda(e) \left(|M_e|-1\right)\left(|N_e|-1\right)& \text{by Lemma $4.1$,}\\
&=d_{\tilde{\lambda}}(x,y).
\end{align*} \end{proof}

\begin{example} The unrooted $8-$taxon caterpillar tree
$$\left(T,l\right)=(\dots(((a_1,a_2):\lambda_1,a_3):\lambda_2,a_4),\dots,a_{6}):\lambda_{5},a_{7}),a_8),
$$
shown in Figure \ref{fig:Qcat}, 
when remetrized with the quartet metrization $\tilde{\lambda}$ has internal edges of weight $$\left(1\cdot 5\right) \lambda_1, \left(2\cdot 4\right) \lambda_2, \left(3\cdot 3\right) \lambda_3, \left(4\cdot 2\right) \lambda_4, \left(5 \cdot 1\right) \lambda_5,$$ 
and pendant edges of length 1.

\begin{figure}
\begin{center}
	\begin{tikzpicture}[node distance=2cm]
	\coordinate[] (cint) {};
	\coordinate[left=0.5cm and 1cm of cint] (lint) {};
	\coordinate[right=0.5cm and 1cm of cint] (rint) {};
	\coordinate[left=0.5cm and 0.8cm of lint] (c) {};
	\coordinate[below =0.9cm and 0.4cm of lint] (d) {};
	\coordinate[below=0.9cm of cint] (a) {};
	\coordinate[right=0.5cm and 0.8cm of rint] (b) {};
	\coordinate[below =0.9cm and 0.4cm of rint] (e) {};
	\coordinate[below=0.9cm of b] (f) {};
	\coordinate[below=0.9cm of c] (g) {};
	\coordinate[right=0.5cm and 1cm of b] (h) {};
	\coordinate[ left=0.5cm and 1cm of c] (i) {};
	\coordinate[ right=0.5cm and 0.8cm of h] (j) {};
	\coordinate[below =0.9cm and 0.4cm of h] (k) {};
	
	\draw (d) -- (lint)  (lint) -- (cint)  (rint) -- (e)  (cint) -- (a) (rint) -- (b) (b)-- (f) (c)-- (g) (b)-- (h) (c)-- (i) (h)-- (j) (h) -- (k) (cint)-- (rint) (lint) --(c);
	
	\begin{scope}[node distance=0.3cm]
	\node[right of=j] {$a_8$};
	\node[below of=k] {$a_7$};
	\node[below of=f] {$a_6$};
	\node[left of=i] {$a_1$};
	\node[below of=g] {$a_2$};
	\node[below of=d] {$a_3$};
	\node[below of=a] {$a_4$};
	\node[below of=e] {$a_5$};
	\end{scope}
	
	\path 
	(cint) edge [right=10]  node[above]  {$\lambda_3$} (rint) 
	(rint) edge [right=10]  node[above]  {$\lambda_4$}(b)  
	(b) edge [ right=10]  node[above]  {$\lambda_5$}(h) 
	(cint) edge [ right=10]  node[above]  {$\lambda_2$}(lint) 
	(lint) edge [right=10]  node[above]  {$\lambda_1$}(c) 
	;
	\end{tikzpicture}
\end{center}

\begin{center}
	\begin{tikzpicture}[node distance=2cm]
	\coordinate[] (cint) {};
	\coordinate[left=1.5cm and 2cm of cint] (lint) {};
	\coordinate[right=1.5cm and 2cm of cint] (rint) {};
	\coordinate[left=1.5cm and 1.8cm of lint] (c) {};
	\coordinate[below =1.5cm and 0.9cm of lint] (d) {};
	\coordinate[below=1.4cm of cint] (a) {};
	\coordinate[right=1.5cm and 1.8cm of rint] (b) {};
	\coordinate[below =1.4cm and 0.9cm of rint] (e) {};
	\coordinate[below=1.4cm of b] (f) {};
	\coordinate[below=1.4cm of c] (g) {};
	\coordinate[right=1.5cm and 2cm of b] (h) {};
	\coordinate[ left=1.5cm and 2cm of c] (i) {};
	\coordinate[ right=1cm and 1.3cm of h] (j) {};
	\coordinate[below =1.4cm and 0.9cm of h] (k) {};
	
	\draw (d) -- (lint)  (lint) -- (cint)  (rint) -- (e)  (cint) -- (a) (rint) -- (b) (b)-- (f) (c)-- (g) (b)-- (h) (c)-- (i) (h)-- (j) (h) -- (k) (cint)-- (rint) (lint) --(c);
	
	\begin{scope}[node distance=0.3cm]
	\node[right of=j] {$a_8$};
	\node[below of=k] {$a_7$};
	\node[below of=f] {$a_6$};
	\node[left of=i] {$a_1$};
	\node[below of=g] {$a_2$};
	\node[below of=d] {$a_3$};
	\node[below of=a] {$a_4$};
	\node[below of=e] {$a_5$};
	\end{scope}
	
	\path 
	(cint) edge [right=10]  node[above]  {$\left(3\cdot 3\right) \lambda_3$} (rint) 
	(rint) edge [right=10]  node[above]  {$\left(4\cdot 2\right) \lambda_4$}(b)  
	(b) edge [ right=10]  node[above]  {$\left(5 \cdot 1\right) \lambda_5$}(h) 
	(cint) edge [ right=10]  node[above]  {$\left(2\cdot 4\right) \lambda_2$}(lint) 
	(lint) edge [right=10]  node[above]  {$\left(1\cdot 5\right)\lambda_1$}(c) 
	;
	\end{tikzpicture}
\end{center}
\caption{An $8-$taxon metric caterpillar tree $\left(T,\lambda\right)$ (top) and its quartet remetrization $\left(T,\tilde{\lambda}\right)$ (bottom).}\label{fig:Qcat}
\end{figure}
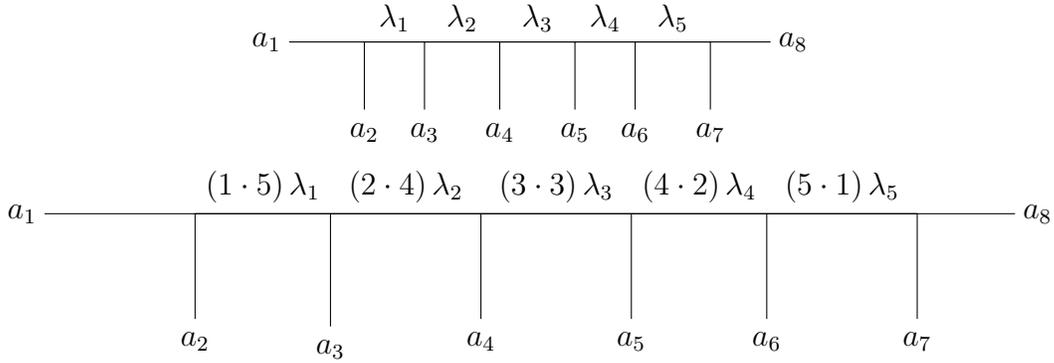

Let $x=a_3$ and $y=a_6$. Then we have
\begin{align*}
d_{\tilde{\lambda}}(a_3,a_6)&=2+ \left(2\cdot 4\right)\lambda_2+\left(3\cdot 3\right)\lambda_3+\left(4\cdot 2\right)\lambda_4.
\end{align*}
The $13$ quartet trees on $T$ separating $a_3$ and $a_6$ are shown in Figure \ref{fig:QmetQs}, so 
\begin{align*}
\sum_{Q\in\mathcal{Q}_{a_3,a_6}} w(Q)&= 2\cdot
  \lambda_2+1 \cdot \lambda_3+2\cdot \lambda_4+2\cdot \left(\lambda_2+\lambda_3\right)+2\cdot \left(\lambda_3+\lambda_4\right)+4\cdot \left(\lambda_2+\lambda_3+\lambda_4\right)\\
&=\left(2\cdot 4\right)\lambda_2+\left(3\cdot 3\right)\lambda_3+\left(4\cdot 2\right)\lambda_4\\
&=d_{\tilde{\lambda}}(a_3,a_6),
\end{align*}
as Theorem \ref{thm:WQmain} states.

\begin{figure}[h]\centering \begin{minipage}{.22\textwidth}
\begin{tikzpicture}[x=1.5cm, y=1.5cm]
	\coordinate[] (lsplit) {};
\coordinate[above left of=lsplit] (esplit) {};
\coordinate[below left of=lsplit] (d) {};
\coordinate[right of=lsplit] (rsplit) {};
\coordinate[above right of=rsplit] (b) {};
\coordinate[below right of=rsplit] (c) {};

\foreach \a/\b in {lsplit/rsplit,d/lsplit,b/rsplit,c/rsplit,esplit/lsplit}
\draw (\a) -- (\b);

\begin{scope}[node distance=0.3cm]
\node[left of=esplit] {$a_{1,2}$};
\node[right of=b] {$a_6$};
\node[right of=c] {$a_4$};
\node[left of=d] {$a_3$};
\node[above right=.1cm of lsplit, xshift=0.1cm] {$\lambda_{2}$};
\end{scope}  
\end{tikzpicture}\captionof{subfigure}{}\end{minipage}
\hfil
\begin{minipage}{.22\textwidth}
  \centering
\begin{tikzpicture}[x=1.5cm, y=1.5cm]
	\coordinate[] (lsplit) {};
\coordinate[above left of=lsplit] (esplit) {};
\coordinate[below left of=lsplit] (d) {};
\coordinate[right of=lsplit] (rsplit) {};
\coordinate[above right of=rsplit] (b) {};
\coordinate[below right of=rsplit] (c) {};

\foreach \a/\b in {lsplit/rsplit,d/lsplit,b/rsplit,c/rsplit,esplit/lsplit}
\draw (\a) -- (\b);

\begin{scope}[node distance=0.3cm]
\node[left of=esplit] {$a_3$};
\node[right of=b] {$a_6$};
\node[right of=c] {$a_5$};
\node[left of=d] {$a_4$};
\node[above right=.1cm of lsplit, xshift=0.1cm] {$\lambda_{3}$};
\end{scope}
\end{tikzpicture}\captionof{subfigure}{}\end{minipage}
\hfil
\begin{minipage}{.22\textwidth}
  \centering
\begin{tikzpicture}[x=1.5cm, y=1.5cm]
	\coordinate[] (lsplit) {};
\coordinate[above left of=lsplit] (esplit) {};
\coordinate[below left of=lsplit] (d) {};
\coordinate[right of=lsplit] (rsplit) {};
\coordinate[above right of=rsplit] (b) {};
\coordinate[below right of=rsplit] (c) {};

\foreach \a/\b in {lsplit/rsplit,d/lsplit,b/rsplit,c/rsplit,esplit/lsplit}
\draw (\a) -- (\b);

\begin{scope}[node distance=0.3cm]
\node[left of=esplit] {$a_3$};
\node[right of=b] {$a_6$};
\node[right of=c] {$a_{7,8}$};
\node[left of=d] {$a_5$};
\node[above right=.1cm of lsplit, xshift=0.1cm] {$\lambda_{4}$};
\end{scope}
\end{tikzpicture}\captionof{subfigure}{}\end{minipage}

\end{figure}

\begin{figure}[h]\centering \begin{minipage}{.22\textwidth}
\begin{tikzpicture}[x=1.5cm, y=1.5cm]
	\coordinate[] (lsplit) {};
\coordinate[above left of=lsplit] (esplit) {};
\coordinate[below left of=lsplit] (d) {};
\coordinate[right of=lsplit] (rsplit) {};
\coordinate[above right of=rsplit] (b) {};
\coordinate[below right of=rsplit] (c) {};

\foreach \a/\b in {lsplit/rsplit,d/lsplit,b/rsplit,c/rsplit,esplit/lsplit}
\draw (\a) -- (\b);

\begin{scope}[node distance=0.3cm]
\node[left of=esplit] {$a_{1,2}$};
\node[right of=b] {$a_6$};
\node[right of=c] {$a_5$};
\node[left of=d] {$a_3$};
\node[above right=.1cm of lsplit, xshift=-0.4cm] {$\lambda_{2}+\lambda_{3}$};
\end{scope}
\end{tikzpicture}\captionof{subfigure}{}\end{minipage}
\hfil
\begin{minipage}{.22\textwidth}
  \centering
\begin{tikzpicture}[x=1.5cm, y=1.5cm]
	\coordinate[] (lsplit) {};
\coordinate[above left of=lsplit] (esplit) {};
\coordinate[below left of=lsplit] (d) {};
\coordinate[right of=lsplit] (rsplit) {};
\coordinate[above right of=rsplit] (b) {};
\coordinate[below right of=rsplit] (c) {};

\foreach \a/\b in {lsplit/rsplit,d/lsplit,b/rsplit,c/rsplit,esplit/lsplit}
\draw (\a) -- (\b);

\begin{scope}[node distance=0.3cm]
\node[left of=esplit] {$a_3$};
\node[right of=b] {$a_6$};
\node[right of=c] {$a_{7,8}$};
\node[left of=d] {$a_4$};
\node[above right=.1cm of lsplit, xshift=-0.4cm] {$\lambda_{3}+\lambda_{4}$};
\end{scope}
\end{tikzpicture}\captionof{subfigure}{}\end{minipage}
\hfil
\begin{minipage}{.22\textwidth}
  \centering
\begin{tikzpicture}[x=1.5cm, y=1.5cm]
	\coordinate[] (lsplit) {};
\coordinate[above left of=lsplit] (esplit) {};
\coordinate[below left of=lsplit] (d) {};
\coordinate[right=1.5cm and 1.8cm of lsplit] (rsplit) {};
\coordinate[above right of=rsplit] (b) {};
\coordinate[below right of=rsplit] (c) {};

\foreach \a/\b in {lsplit/rsplit,d/lsplit,b/rsplit,c/rsplit,esplit/lsplit}
\draw (\a) -- (\b);

\begin{scope}[node distance=0.3cm]
\node[left of=esplit] {$a_{1,2}$};
\node[right of=b] {$a_6$};
\node[right of=c] {$a_{7,8}$};
\node[left of=d] {$a_3$};
\node[above right=.1cm of lsplit, xshift=-0.4cm] {$\lambda_{2}+\lambda_{3}+\lambda_{4}$};
\end{scope}
\end{tikzpicture}\captionof{subfigure}{}\end{minipage}
\caption{The $13$ quartet trees on $\left(T,\lambda\right)$ separating $a_3$ and $a_6$. Multiple taxa on a leaf represent choices leading to multiple quartet trees.}\label{fig:QmetQs}
\end{figure}
\end{example}

\newpage

\begin{example}
Consider an unrooted balanced tree
$$(((a_1,a_2):\lambda_1,(a_3,a_4):\lambda_2):\lambda_3,((a_{5},a_{6}):\lambda_4,(a_{7},a_{8}):\lambda_5))
$$
on $8$ taxa as shown in Figure \ref{fig:WQexBal}. After remetrization, we have internal edges of weight $$\left(1\cdot 5\right)\lambda_1, \left(1\cdot 5\right)\lambda_2, \left(3\cdot 3\right)\lambda_3, \left(1\cdot 5\right)\lambda_4, \left(1\cdot 5\right)\lambda_5. $$
Suppose $x=a_3$ and $y=a_6$. Then
$$d_{\tilde{\lambda}}(a_3,a_6)=\left(1\cdot 5\right)\lambda_2+\left(3\cdot 3\right)\lambda_3+\left(1\cdot 5\right)\lambda_4.
$$
On the other hand, by listing the $13$ quartet trees separating $a_3$ and $a_6$ we find: 

\begin{align*}
\sum_{Q\in\mathcal{Q}_{a_3,a_6}} w(Q)&= 2\cdot \lambda_2+4\cdot \lambda_3+2\cdot \lambda_4+2\cdot \left(\lambda_2+\lambda_3\right)+2\cdot \left(\lambda_3+\lambda_4\right)+1\cdot \left(\lambda_2+\lambda_3+\lambda_4\right)\\
&=(2+2+1)\lambda_2+(4+2+2+1)\lambda_3+(1+2+1)\lambda_4,
\end{align*} 
which is equal to $d_{\tilde{\lambda}}(a_3,a_6)$.

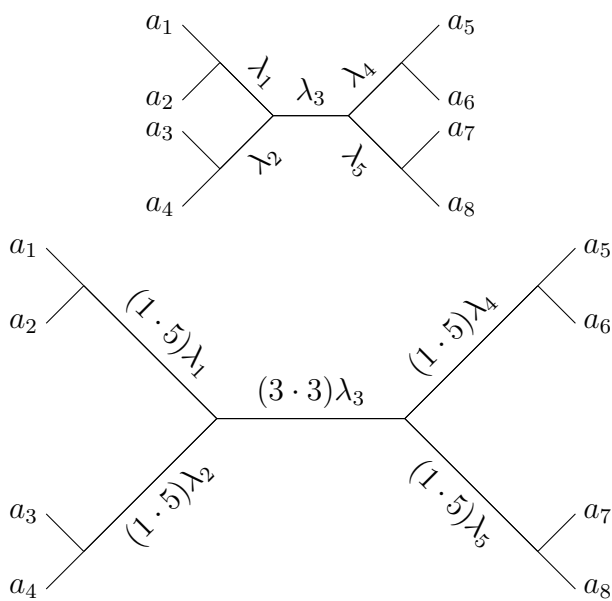
\begin{figure}[h]
	\begin{center}
\begin{tikzpicture}[x=1.5cm, y=1.5cm]
\coordinate[] (lsplit) {};
\coordinate[above left of=lsplit] (esplit) {};
\coordinate[below left of=lsplit] (d) {};
\coordinate[right of=lsplit] (rsplit) {};
\coordinate[above right of=rsplit] (b) {};
\coordinate[below right of=rsplit] (c) {};
\coordinate[above left=0.5cm and 0.5cm of esplit] (e) {};
\coordinate[below left=0.5cm and 0.5cm of esplit] (f) {};
\coordinate[above left=0.5cm and 0.5cm of d] (d1) {};
\coordinate[below left=0.5cm and 0.5cm of d] (d2) {};
\coordinate[above right=0.5cm and 0.5cm of b] (b1) {};
\coordinate[below right=0.5cm and 0.5cm of b] (b2) {};
\coordinate[above right=0.5cm and 0.5cm of c] (c1) {};
\coordinate[below right=0.5cm and 0.5cm of c] (c2) {};

\foreach \a/\b in {lsplit/rsplit,d/lsplit,b/rsplit,c/rsplit,esplit/lsplit, esplit/e, esplit/f, d/d1,d/d2,b/b1,b/b2,c/c1,c/c2}
\draw (\a) -- (\b);

\begin{scope}[node distance=0.3cm]
\node[left of=e] {$a_1$};
\node[left of=f] {$a_2$};
\node[left of=d1] {$a_3$};
\node[left of=d2] {$a_4$};
\node[right of=b1] {$a_5$};
\node[right of=b2] {$a_6$};
\node[right of=c1] {$a_7$};
\node[right of=c2] {$a_8$};
\end{scope} 

	\path[every node/.style={sloped,anchor=south,auto=false}]
(lsplit) edge              node {$\lambda_1$} (esplit)            
(lsplit) edge              node {$\lambda_3$} (rsplit)
(rsplit) edge              node {$\lambda_{4}$} (b)
;
	\path[every node/.style={sloped,anchor=north,auto=false}]           
(lsplit) edge              node {$\lambda_2$} (d)
(rsplit) edge              node {$\lambda_{5}$} (c)
; 
\end{tikzpicture}

\begin{tikzpicture}[node distance=2.5cm]
\coordinate[] (lsplit) {};
\coordinate[above left of=lsplit] (esplit) {};
\coordinate[below left of=lsplit] (d) {};
\coordinate[right of=lsplit] (rsplit) {};
\coordinate[above right of=rsplit] (b) {};
\coordinate[below right of=rsplit] (c) {};
\coordinate[above left=0.5cm and 0.5cm of esplit] (e) {};
\coordinate[below left=0.5cm and 0.5cm of esplit] (f) {};
\coordinate[above left=0.5cm and 0.5cm of d] (d1) {};
\coordinate[below left=0.5cm and 0.5cm of d] (d2) {};
\coordinate[above right=0.5cm and 0.5cm of b] (b1) {};
\coordinate[below right=0.5cm and 0.5cm of b] (b2) {};
\coordinate[above right=0.5cm and 0.5cm of c] (c1) {};
\coordinate[below right=0.5cm and 0.5cm of c] (c2) {};

\foreach \a/\b in {lsplit/rsplit,d/lsplit,b/rsplit,c/rsplit,esplit/lsplit, esplit/e, esplit/f, d/d1,d/d2,b/b1,b/b2,c/c1,c/c2}
\draw (\a) -- (\b);

\begin{scope}[node distance=0.3cm]
\node[left of=e] {$a_1$};
\node[left of=f] {$a_2$};
\node[left of=d1] {$a_3$};
\node[left of=d2] {$a_4$};
\node[right of=b1] {$a_5$};
\node[right of=b2] {$a_6$};
\node[right of=c1] {$a_7$};
\node[right of=c2] {$a_8$};
\end{scope} 

\path[every node/.style={sloped,anchor=south,auto=false}]
(lsplit) edge              node {$(1\cdot 5)\lambda_1$} (esplit)            
(lsplit) edge              node {$(3\cdot 3)\lambda_3$} (rsplit)
(rsplit) edge              node {$(1\cdot 5)\lambda_{4}$} (b)
;
\path[every node/.style={sloped,anchor=north,auto=false}]           
(lsplit) edge              node {$(1\cdot 5)\lambda_2$} (d)
(rsplit) edge              node {$(1\cdot 5)\lambda_{5}$} (c)
; 
\end{tikzpicture}		
\end{center}
\caption{An unrooted  8-taxon balanced metric tree, with original edge lengths (top) and quartet remetrization (bottom).}\label{fig:WQexBal}
\end{figure}
\end{example}


\section{Weighted Quartet Distance Supertree and Consensus Algorithms}\label{sec:WQDS}

Since, by Theorems \ref{thm:WRTmain} and \ref{thm:WQmain}, the pairwise distances between taxa on trees given the rooted triple or quartet remetrizations of the previous sections can be computed from knowing only the weighted rooted triples or weighted quartets displayed on the original tree, they lead to new methods of inferring a large metric tree from that information.

After computing pairwise distances from weighted rooted triples or quartets using the formulas of Theorem \ref{thm:WRTmain} or \ref{thm:WQmain}, a standard distance-based tree construction algorithm can be used to build the remetrized tree. Then the individual internal edge lengths can be adjusted to remove the multiplier arising from the tree topology in the remetrization. If the tree construction method is robust to some noise,
then the presence of a sufficiently small number of erroneous quartets, or sufficiently small errors in the weights, should still allow for construction of an approximation to the original metric tree, with pendant edges weights set to 1.

\subsection{Inferring a tree from displayed weighted quartets}

In the quartet case, we present this as a formal algorithm. Let $\mathcal M$ denote any method of constructing a metric tree from pairwise distances between taxa.
For example for $\mathcal M$  one might choose Neighbor Joining (NJ) \cite{SK88} or FastME \cite{FastME}. 
 
 \begin{alg}\label{alg:WQDS} (WQDS/$\mathcal M$) Weighted Quartet Distance Supertree with method $\mathcal M$
 	
	Input: A collection $\mathcal Q$ of weighted quartets on taxa in $X$
 	\begin{enumerate}
 		\item For each pair $x,y\in X$ of taxa, $x\neq y$, with $\mathcal Q_{x,y} \subset\mathcal Q$ the subset of weighted quartets separating $x$ and $y$,  define the distance $$d_{\tilde{\lambda}}(x,y)=2+\sum_{Q\in \mathcal{Q}_{x,y}} w(Q).$$
 		\item Use the distance method $\mathcal M$ to build an unrooted metric tree $(T,\tilde \lambda)$ from $d_{\tilde{\lambda}}$.
		\item For each internal edge $e$ on $T$ with associated split $M_e|N_e$, let
		 $$\lambda(e)=\frac{\tilde \lambda(e)}{(|M_e|-1)(|N_e|-1)}.$$ 
		For pendant edges $e$, let $\lambda(e)=1$.
 	\end{enumerate}	
    Output: An unrooted metric tree $(T,\lambda)$ on $X$.	 
 \end{alg}
 
 The first step of this algorithm, when applied to a set composed of one weighted quartet per choice of 4 taxa in $X$ has running time
$\mathcal O(|X|^4)$: One must consider $\binom {|X|}{4}$ quartets, each of which contributes to 4 of the $\binom{|X|}2$sums in that step. If $\mathcal M$ is NJ, the second step requires time $\mathcal O(|X|^3)$ to obtain a metric tree. By traversing the edges of the tree once, one can compute the $M_e,N_e$ and adjust the edge lengths as in step 3, for an additional time of $\mathcal O(|X|)$. Thus the entire algorithm is accomplished in time $\mathcal O(|X|^4)$.

\medskip

For WQDS to be used, its input of weighted quartet trees must first be obtained. For one genetic locus one might, for example, infer all metric quartet trees on $X$ by standard phylogenetic methods, and use the resulting weighted quartets. However, as direct inference of large trees for one locus is already well established and relatively quick, and older quartet methods for this problem are no longer in use, we do not further explore that application. Instead we consider a problem of greater current interest: inferring a species tree from a collection of gene trees.

\subsection{Inferring a species tree from gene trees}

The standard model for the generation of gene trees from a fixed metric species tree is the \emph{multispecies coalescent model} (MSC) \cite{PamiloNei1988}. The species tree, denoted by $\sigma^r$, is rooted with edge weights in \emph{coalescent units}. Coalescent units are obtained from more biologically
natural units by inversely scaling the number of generations the edge represents, by the population size, as these cannot be separately identified under the MSC. If the population size is a constant $N$ and the edge represents $t$ generations, the edge weight is simply $t/N$. If the population varies with time
$s\in [0,t]$ along the edge, then the weight is $$\int_0^t \frac 1{N(s)}\, ds.$$

Under the MSC with one sampled gene lineage per taxon, if the species tree $\sigma^r$ displays a quartet $ab|cd$ with weight $x$ (the length of the induced quartet tree's internal edge in coalescent units), then the probabilities that a gene tree will display each of the three resolved quartet topologies on these taxa are \cite{adr2011a}
$$p_{ab|cd}=1-\frac 23 \exp(-x),\ p_{ac|bd}=\frac 13 \exp(-x), \ p_{ad|bc}=\frac 13 \exp(-x).$$
If the rooted triple $ab|c$ with weight $x$ is displayed on $\sigma^r$, then 
the same formulas give probabilities of a gene tree displaying rooted triples $ab|c$, $ac|b$, and $bc|a$ respectively \cite{PamiloNei1988}. In particular, since $x>0$, the quartet or rooted triple with the highest probability of being displayed on a gene tree is the one displayed on the species tree.

This suggests the following algorithm for inferring an unrooted metric species tree from a collection of gene trees under the MSC.

\begin{alg}\label{alg:WQDC} (WQDC/$\mathcal M$)  Weighted Quartet Distance Consensus with method $\mathcal M$ 
	
	Input: A collection of $n$ topological gene trees on taxa $X$
	\begin{enumerate}
		\item For each subset of four taxa $x,y,z,w \in X$, determine the counts of the quartets $xy|zw$, $xz|yw$, and $xw|yz$ 
		displayed on the gene trees. 
		\item 
		For each subset of four taxa $x,y,z,w \in X$, choose the dominant (i.e, most frequent) quartet as the estimated quartet topology.
		In the case of a tie, choose from the most frequent uniformly at random. With $n_{dom}$ the number of gene trees  
		displaying the dominant quartet on $x,y,z,w$, solve the equation $$1-\frac{2}{3}e^{-\hat x}=\frac{n_{dom}}{n}$$ 
		to find $\hat{x}$ as  the estimated weight of the dominant quartet tree.
		
		\item Apply WQDS/$\mathcal M$ to the set of $\binom{n}{4}$ estimated weighted dominant quartets.
	\end{enumerate}
	Output: An unrooted metric tree	 on $X$
\end{alg}

As discussed in \cite{Rhodes2019}, step (1)(a) can be accomplished in time $\mathcal O(|X|^4n)$, with step (2) requiring only time $\mathcal O(|X|^4)$.
Combined with the time for WQDS/$\mathcal M$ for $\mathcal M$=NJ shown earlier, the total time is $\mathcal O(|X|^4n)$. Thus, the most time intensive step in the algorithm is tallying the displayed quartets.

\medskip

Let us say a distance method $\mathcal M$ of constructing a metric tree from pairwise distances is \emph{well-behaved} if 1) when applied to a tree metric
returns the unique tree it fits, and 2) is continuous at all tree metrics. The second requirement means that a sufficiently small perturbation in a distance table fitting a binary tree will result in an output of the same binary tree topology, with only small perturbations in the edge weights. Both NJ and Minimum Evolution (ME) are well-behaved, though in practice the heuristic FastME is often used in place of ME.

\begin{theorem}\label{thm:Qconsist} Let $\mathcal M$ be any well-behaved distance method for tree building.
Under the MSC model with one sampled lineage per taxon per gene, on a binary rooted metric species tree $(\sigma^r,\lambda^r)$, the output of the WQDC/$\mathcal M$ algorithm is a statistically consistent estimator of both the unrooted topological tree $\sigma$ and the internal edge lengths in $\lambda$.
\end{theorem}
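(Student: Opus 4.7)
The plan is to decompose the argument into three stages matching the three operations of WQDC/$\mathcal M$---dominant-quartet selection, weight estimation, and distance-based tree construction---and at each stage propagate convergence in probability through an appropriate continuous map. The underlying engine is the Strong Law of Large Numbers applied to the multinomial counts of gene-tree quartets on each 4-taxon subset.

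First I would show that for every 4-taxon subset $\{a,b,c,d\}\subset X$, the dominant quartet chosen in step (2) matches the species quartet $ab|cd$ (with induced internal edge length $x>0$, assuming positive internal edge lengths on $\sigma$) with probability tending to $1$. Under the MSC the three quartet probabilities satisfy $p_{ab|cd}=1-\frac{2}{3}e^{-x}>\frac{1}{3}$ and $p_{ac|bd}=p_{ad|bc}=\frac{1}{3}e^{-x}<\frac{1}{3}$, so the empirical frequencies, which converge almost surely by the Strong Law, rank $ab|cd$ strictly first in the limit and the random tie-breaking becomes irrelevant. A finite union bound over the $\binom{|X|}{4}$ subsets gives joint agreement for all subsets simultaneously.

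Next I would show that the estimated weight $\hat{x}$ is consistent. On the event that the correct quartet is chosen, $n_{dom}/n\to 1-\frac{2}{3}e^{-x}$ almost surely; since $x>0$ this limit lies in $(\frac{1}{3},1)$, where the inverse map $u\mapsto -\log(\frac{3(1-u)}{2})$ is continuous, so by the Continuous Mapping Theorem $\hat{x}\to x$ in probability, with a further finite union giving uniformity over quartets. Combining these two stages with continuity of the finite sum defining step (1) of WQDS, the empirical distances computed in the algorithm converge in probability to the exact distances $d_{\tilde{\lambda}}(x,y)=2+\sum_{Q\in\mathcal{Q}_{x,y}} w(Q)$ of Theorem \ref{thm:WQmain}, which by that theorem form a tree metric fitting the quartet remetrization $(\sigma,\tilde{\lambda})$.

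Finally, well-behavedness of $\mathcal M$ ensures that $\mathcal M$ applied to this exact tree metric returns $(\sigma,\tilde{\lambda})$, and its continuity at this input guarantees that for all sufficiently small perturbations the output tree has topology $\sigma$ with edge lengths close to $\tilde{\lambda}$. The edge-length inversion in step (3) of WQDS is a continuous operation (once the topology, and hence the split sizes $|M_e|,|N_e|$, are correctly determined) that recovers $\lambda(e)$ on each internal edge from $\tilde{\lambda}(e)$. The main obstacle is essentially bookkeeping: ensuring joint convergence of quartet topologies and weights across all $\binom{|X|}{4}$ subsets (handled by finite union bounds since $|X|$ is finite) and verifying that the weight-inversion map stays in its continuity region (handled by $x>0$ for each induced quartet). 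No new combinatorial ideas are required beyond Theorem \ref{thm:WQmain}, the MSC quartet probability formulas, and the definition of well-behavedness.
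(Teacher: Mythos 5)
Your proposal is correct and follows essentially the same route as the paper's own proof: law of large numbers for the quartet counts, a union bound over the finitely many 4-taxon subsets, convergence of the estimated weights and hence of the pairwise distances to the exact remetrized tree metric of Theorem \ref{thm:WQmain}, and then well-behavedness of $\mathcal M$ plus continuity of the edge-length rescaling in step (3). The only differences are cosmetic (you invoke the continuous mapping theorem explicitly and flag the $x>0$ assumption, which the paper handles implicitly via the binary hypothesis).
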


\begin{proof}
Consider a collection of $n$ gene trees generated under the MSC on $(\sigma^r,\lambda^r)$. Then for each choice of four taxa $x,y,z,w$,  by the law of large numbers as $n\to \infty$ the probability that the dominant quartet topology matches the quartet displayed on the species tree $\to 1$. Similarly, for any choice of $\epsilon>0$ the probability that the estimated weight $\hat x$ is within $\epsilon$ of the quartet weight on the species tree also $\to 1$. Since there are a finite number of sets of 4 taxa, as $n\to \infty$ the probability that all dominant quartet topologies match that on the species tree, and all weights are within $\epsilon$ of the true value also $\to1$. 

Thus for any choice of $\epsilon>0$, with probability $\to 1$ as $n\to \infty$ the computed pairwise quartet distances will be within $\epsilon$ of the true values on the species tree with the quartet remetrization.
Since $\mathcal M$ is well behaved, with probability $\to 1$ it will return the unrooted topology of $\sigma$, with internal edge lengths differing from true remetrized values  by arbitrarily small  amounts. Adjusting the lengths of the internal edges to estimate the original species tree edge lengths involves dividing by a number $\ge 1$, so as $n \to \infty$ these estimates can also be made within 
$\epsilon$ of the true values with probability 1.
\end{proof}

It is actually not necessary that all taxa in $X$ are on all gene trees for statistical consistency. As was done in \cite{Rhodes2019} for the method QDC, one can relax that condition as long as 1) the pattern of missingness of taxa is independent of the gene tree topology, and 2) as the total number of gene trees goes to infinity, so does the number on which each set of 4 taxa appears.

Note that WQDC/$\mathcal M$ as presented above does not allow for inference of pendant edge weights on $\sigma$. However, if input gene trees have at least 2 samples per taxon,  one can infer those as well, by simply considering an extended species tree obtained by 
appending two edges of length 0 to each leaf. Similar modifications allow for more samples per taxon.

\begin{rmk}
For Weighted Rooted Triple Distance Supertree with method $\mathcal M$ (WRTDS/$\mathcal M$), one replaces the formulas in steps (1) and (3) of Algorithm \ref{alg:WQDC} with similar one arising from Theorem \ref{thm:WRTmain} and equation \eqref{eq:WRTmult}. Note that $\mathcal M$ can now be chosen to assume ultrametricity of the distance (e.g., UPGMA), since $d_{\tilde{l}}$ approximates an ultrametric tree metric. If such an $\mathcal M$ is used, then a rooted tree will be returned, and an estimate of both the rooted topology and all its internal edges will be inferred.

Weighted Rooted Triple Distance Consensus with method $\mathcal M$ (WRTDC/$\mathcal M$) is given by modifying Algorithm \ref{alg:WQDC} to count displayed rooted triples, and use WRTDS/$\mathcal M$.

A consistency result for WRTDC/$\mathcal M$ can be shown similarly to Theorem \ref{thm:Qconsist}.
\end{rmk}

\begin{rmk} 
In applying WQDC/$\mathcal M$ to data, there is one serious practical issue that may need to  be addressed. In a finite sample of gene trees, one may find that the dominant quartet for a set of 4 taxa is displayed on every gene tree. Then solving
$$1-\frac{2}{3}e^{-x}=1$$
leads to an estimated weight of $\infty$ for that quartet. While this correctly indicates the weight should be large, it does not give the finite estimate that is typically needed for applying a tree building method. 

Since the MSC does not give expected counts of 100\% for one quartet topology for any finite edge weight, this situation can be interpreted as a sign of an insufficient number of gene trees in the data set to properly estimate the weight.
One approach to addressing this is to treat counts of $(n,0,0)$ for the 3 topologies on a given set of 4 taxa as having dominant count $n-1/2$ out of a total of $n$. That is,
we reduce the actual count slightly, by less than 1, to represent an expected count that our sample size would still be likely to show as 100\% agreement.

This \emph{ad hoc} adjustment will result in all infinite weights being replaced by the same finite number. But note that with such weights need not result in a good approximation to the desired distance between taxa. A better approach, though one that may not be feasible given practical data collection constraints, is simply to obtain more gene trees so this situation does not occur, or restrict to collections of taxa that are closely enough related so that all sets of 4 taxa show some quartet discordance across the gene trees.
\end{rmk}

\medskip

As will be shown through simulations in the next section, WQDC/$\mathcal M$ may not perform as well as other methods for inferring the topology of the species tree.
The reason for this appears to be our inability to obtain accurate estimates of the weight of quartets when they are displayed on all, or almost all, gene trees.
While the heuristic described above gives us a finite estimate
which is necessary to have the finite distances between taxa that the algorithm requires, it is unlikely to be very accurate. Even if a handful of gene trees display a quartet other than the dominant one, the estimate of the weight is often not to very accurate.

This is not an unusual situation as it often occurs when four taxa are widely placed on a species tree, and can occur for taxa whose displayed quartet has only a single edge of the species tree as its internal edge, provide that edge is long in coalescent units.
However, simulations suggested to us that a tree inferred by WQDC/$\mathcal M$ often did correctly display many correct splits, and those with long edge lengths tended to be correct. That observation is the basis for the following algorithm. It proceeds by using WQDC/$\mathcal M$ to pick only one split on the species tree with the largest weight, then dividing the taxa into two groups by this split, and recursively building subtrees on these groups. This process seeks to divide the taxa into smaller groups that will be closer together, so that the poor behavior caused by long edges will not be present in the later stages of the recursion. While it cannot be expected to improve edge length estimates of longer edges, the hope is that the shorter lengths will be estimated well.

\begin{alg}\label{alg:rWQDC} (Recursive WQDC/$\mathcal M$)  Recursive Weighted Quartet Distance Consensus with method $\mathcal M$
	
	Input: A collection of $n$ topological gene trees on taxa $X$, and positive number $L$
	\begin{enumerate}
	        \item For each subset of four taxa $x,y,z,w \in X$, 
		Determine the counts of the quartets $xy|zw$, $xz|yw$, and $xw|yz$ displayed on the gene trees. 
		\item  If $X$ has 3 or fewer taxa, return the unique unrooted tree on $X$ with all edge lengths 1.
		Otherwise,
		\begin{enumerate} 
		\item Apply Steps (2) and (3) of WQDC/$\mathcal M$ to the quartet counts obtain an estimated metric species tree $\tau$.
		\item If all internal edge weights on $\tau$ are less than $L$, return $\tau$.
		\item Let $X_0|X_1$ be the split of $X$ associated to the longest edge of $\tau$, and $\ell_{X_0|X_1}$ its length.
		In the case of a tie, choose the edge uniformly at random from the longest edges.
		
		\item Create taxon sets $X_0'=X_0\cup\{y_1\}$ and $X_1'=X_1\cup\{y_0\}$, where $y_0,y_1$ represent ``composite taxa" for the split sets
		$X_0,X_1$. For each choice of 4 taxa in $X_i'$ compute quartet counts as follows:
		For quartets containing $y_{1-i}$, sum over $x\in X_{1-i}$ the
		counts from Step (1) containing $x$ in place of $y_{1-i}$. For quartets containing only elements in $X_i$, retain the quartet 
		counts from Step (1).
		
		\item Recursively apply Step (2) to the quartet counts for $X_0'$ and  $X_1'$ to obtain metric trees $\tau_0$, $\tau_1$ on $X_0'$, $X_1'$.
		
		\item Form a metric tree $\sigma$ by identifying leaf $y_1$ on $\tau_0$ with $y_0$ on $\tau_1$, surpressing that node, and assigning
		the conjoined edge length $\ell_{X_0|X_1}$. Return $\sigma$.
		
		\end{enumerate}

	\end{enumerate}
	Output: An unrooted metric tree	on $X$
\end{alg}

Step (1) requires time $\mathcal O(|X|^4n)$. One application of Step (2) (without the recursive call) on quartet counts for $k$ taxa has time  $\mathcal O(k^4)$. In the worse case, the split sets have sizes $2$,$k-2$ for each recursive call and at every step there is an internal edge weights $\ge L$, leading to time $\mathcal O(|X|^4n+|X|^5)$ for the entire algorithm. However, variations on this algorithm, in which all splits with weights over $L$ in the tree of Step (1) are retained might reduce the typical running time considerably in practical use.

\smallskip

A reasonable choice for the parameter $L$ might be $L=2$. This corresponds to the quartets defining an edge of length $<2$ having an expected frequency of at most $1-(2/3)\exp(-2)\approx 0.9098$ of the displayed gene quartets matching the species tree quartet.


\section{Algorithm Performance in Simulations}\label{sec:sims}

Although the algorithms of the last section provide statistically consistent estimators of a species tree from gene trees under the MSC model, their practical performance will be affected by several factors. First, even if gene trees are sampled from the MSC with no error, an algorithm cannot be expected to always infer the underlying species tree from a finite sample of gene trees. Second, if the input gene trees for the algorithm are inferred from sequences that were simulated along the gene trees under some standard substitution model, there is likely to be some inference error in the gene trees due to the finiteness of sequences. Finally, for empirical data neither the MSC nor the substitution models may exactly describe the true processes, so that there is additional error from model misspecification. Although the performance of phylogenetic inference methods under model misspecification is rarely investigated, simulations can provide insight into the effects of the first two issues.

\smallskip

As an initial, and limited, investigation into the performance of the  algorithms of the last section, we present some simulation analysis following the framework of
\cite{Rhodes2019}, using the simulated Avian data sets of \cite{Bayzid14} which were also used in \cite{ASTRID}. All calculations were performed in R using the \texttt{ape }\cite{ape} and \texttt{MSCquartets} \cite{MSCquartets} packages. These data sets for a fixed species tree contain both a sample of gene trees under the MSC, and inferred gene trees from sequences simulated on the sampled gene trees. In addition, there are similar datasets for rescalings of the species tree by factors of 0.5 and 2, to respectively increase and decrease the amount of incomplete lineage sorting.
For details on the simulation and gene tree inference procedure, see  the referenced publications. 

To reduce computation time, we pass from the original 48-taxon species tree, to the 30-taxon subtree described in \cite{Rhodes2019}. We similarly pass to subtrees of both gene trees sampled under the MSC, and subtrees of inferred gene trees. Although  these subtrees of inferred gene trees may not be exactly the trees that would be inferred from the subset of sequences, differences are likely to be small.

We quantify the accuracy of methods in two ways. First, for topological accuracy, we compute the normalized Robinson-Foulds ($RF$) distance between the true unrooted species tree and the inferred one. The normalization is such that two trees displaying none of the same non-trivial splits will have distance 1, and two binary 30-taxon trees differing by a single NNI move have distance $2/2(30-3)=0.037$. 

Second, for metric accuracy, we use a non-standard variant of a distance of  Kuhner and Felsenstein \cite{KFdist} between the true species tree and the inferred one. For the $KF$ distance as implemented in \texttt{ape}, for each tree one first forms a vector whose entries correspond to all possible splits of the taxa, with an entry of the length of the edge
defined by the split if it is displayed on the tree and 0 otherwise. The Euclidean distance between the vectors for the two trees then gives the distance. The variant we use, denoted $KF[x]$, replaces any vector entry corresponding to a trivial split with 1 and any entry 
larger than $x$ with  $x$. This treatment of trivial splits is necessary since pendant edge lengths cannot be inferred from this data. The treatment of entries larger than $x$
prevents a split that is displayed on both trees with defining edges of length $\ge x$ but of significantly different size from influencing the distance. We use $x=2$ here, since such long edges on the true species tree give rise to expected quartet counts in which one is large and the others small. These are precisely the counts for which stochastic variation produces
large variation in estimated lengths. Note that if two trees differed by splits with edge lengths $\ge 2$, then their $KF[2]$ distance would be at least 4. Thus a $KF[2]$ distance less than 4 indicates the two tree topologies agree on all long-edge splits. The choice of 2 here is of course arbitrary, but based on the reasoning given at the end of the last section.

\begin{figure}
		\includegraphics[scale=.7]{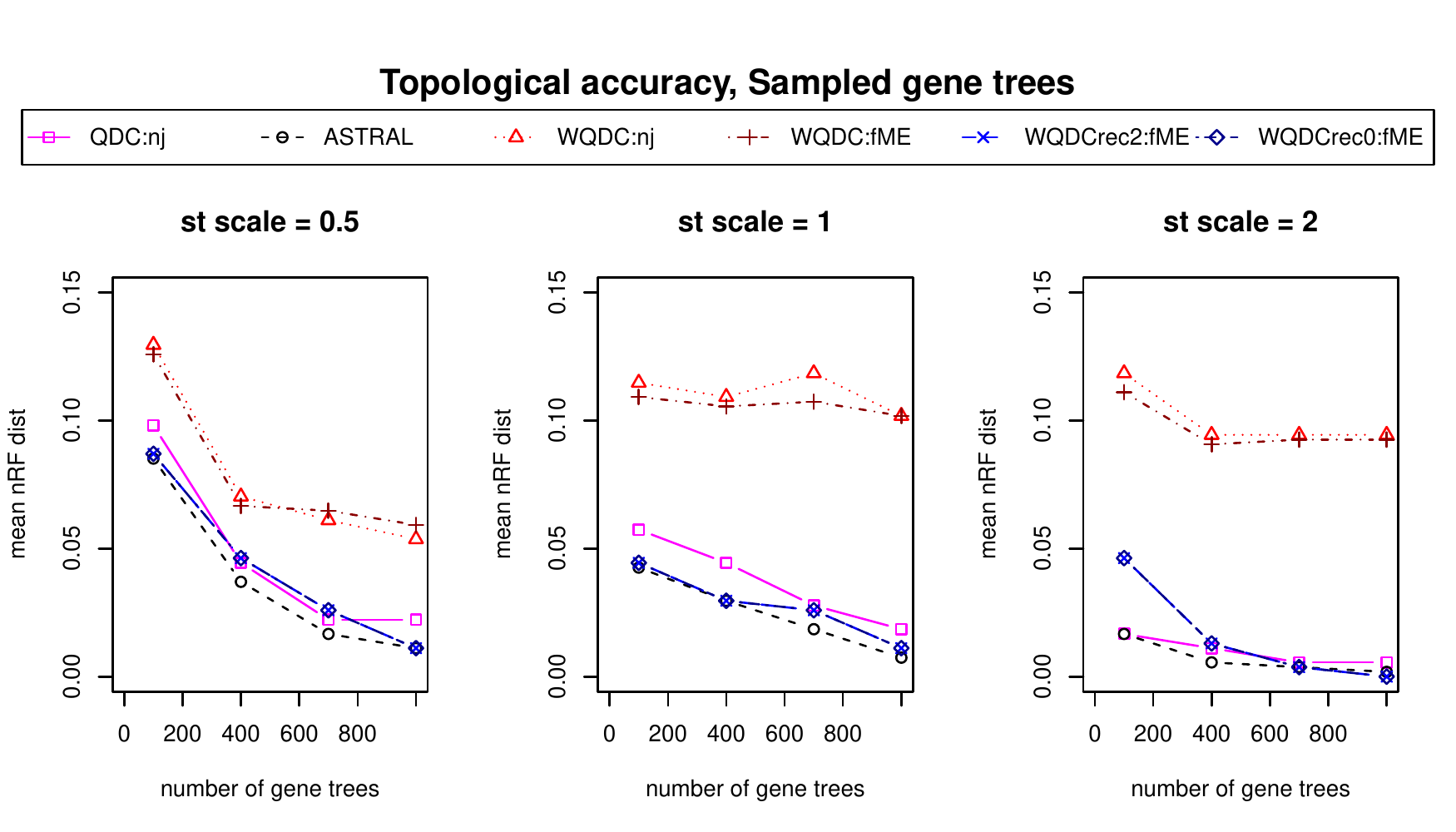}
		\includegraphics[scale=.7]{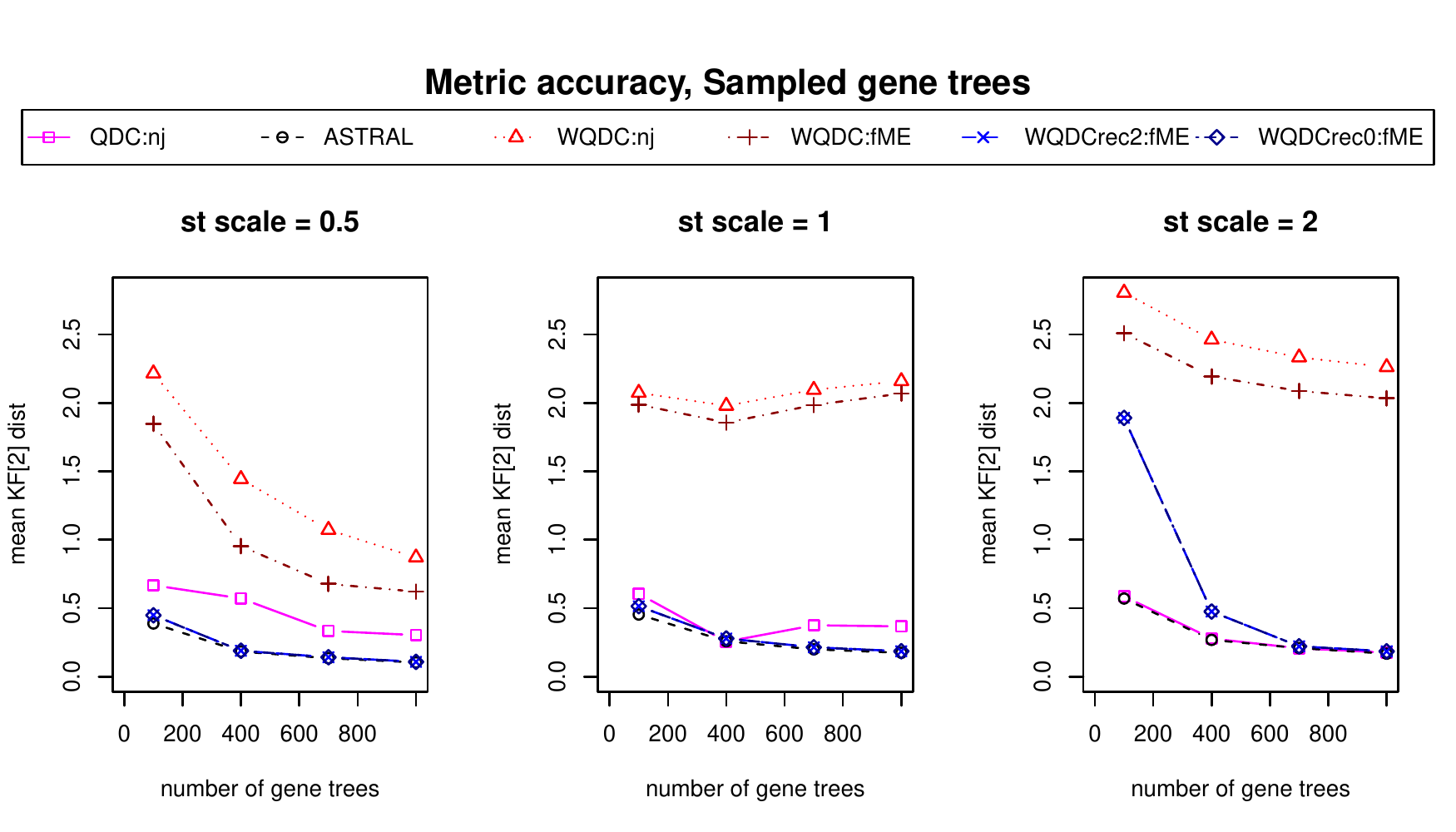}
		\caption{Simulation results on accuracy of  methods  of inference of species trees from gene trees sampled under the MSC.}\label{fig:AccMSCsample}
	\end{figure}
	
\smallskip

The simulated data sets contain 20 replicates of 1000 sampled and inferred gene trees for each condition, with gene trees inferred from 500 base sequences. In Figures \ref{fig:AccMSCsample} and \ref{fig:AccGTinferred} we illustrate the mean over the replicates of the distances of inferred species trees from the true one. Results are given for  $g=100$, 400,  700, and 1000 gene trees, by using only the first $g$ gene trees in each simulated collection. We present results of WQDC (Algorithm \ref{alg:WQDC}) using both the NJ and fastME algorithms for tree building, as well as Recursive WQDC (Algorithm \ref{alg:rWQDC}) using FastME for $L=2,0$. For comparison to other methods, we include ASTRAL and QDC, which were already compared for topological inference in \cite{Rhodes2019}. Internal edge lengths for trees inferred by these methods, which infer only topological trees, were assigned by methods that use only counts of quartets for sets of four taxa defining those edges, see \cite{ASTRALIII,MSCquartets} for precise desciptions.

\begin{figure}
		\includegraphics[scale=.7]{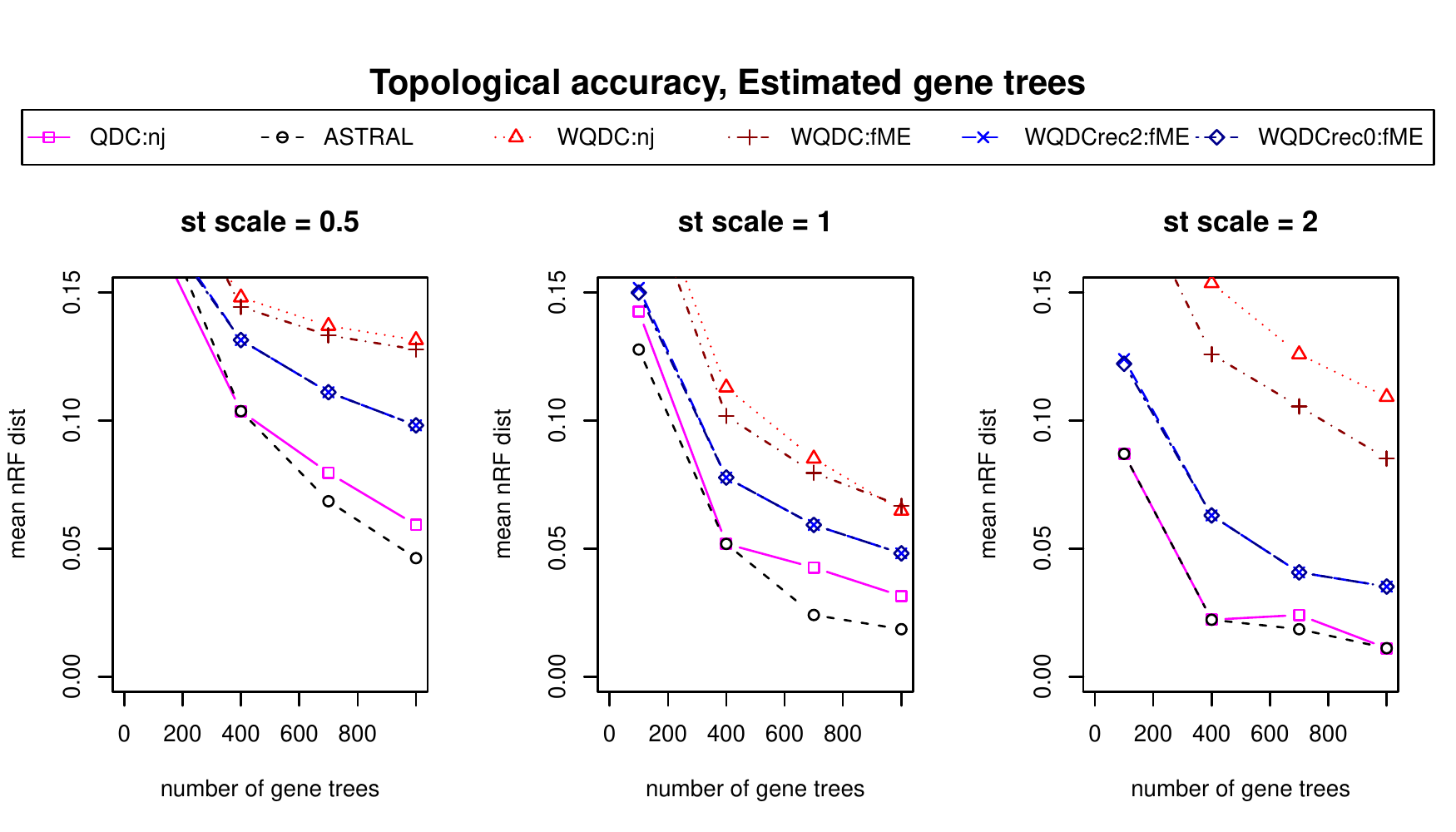}
		\includegraphics[scale=.7]{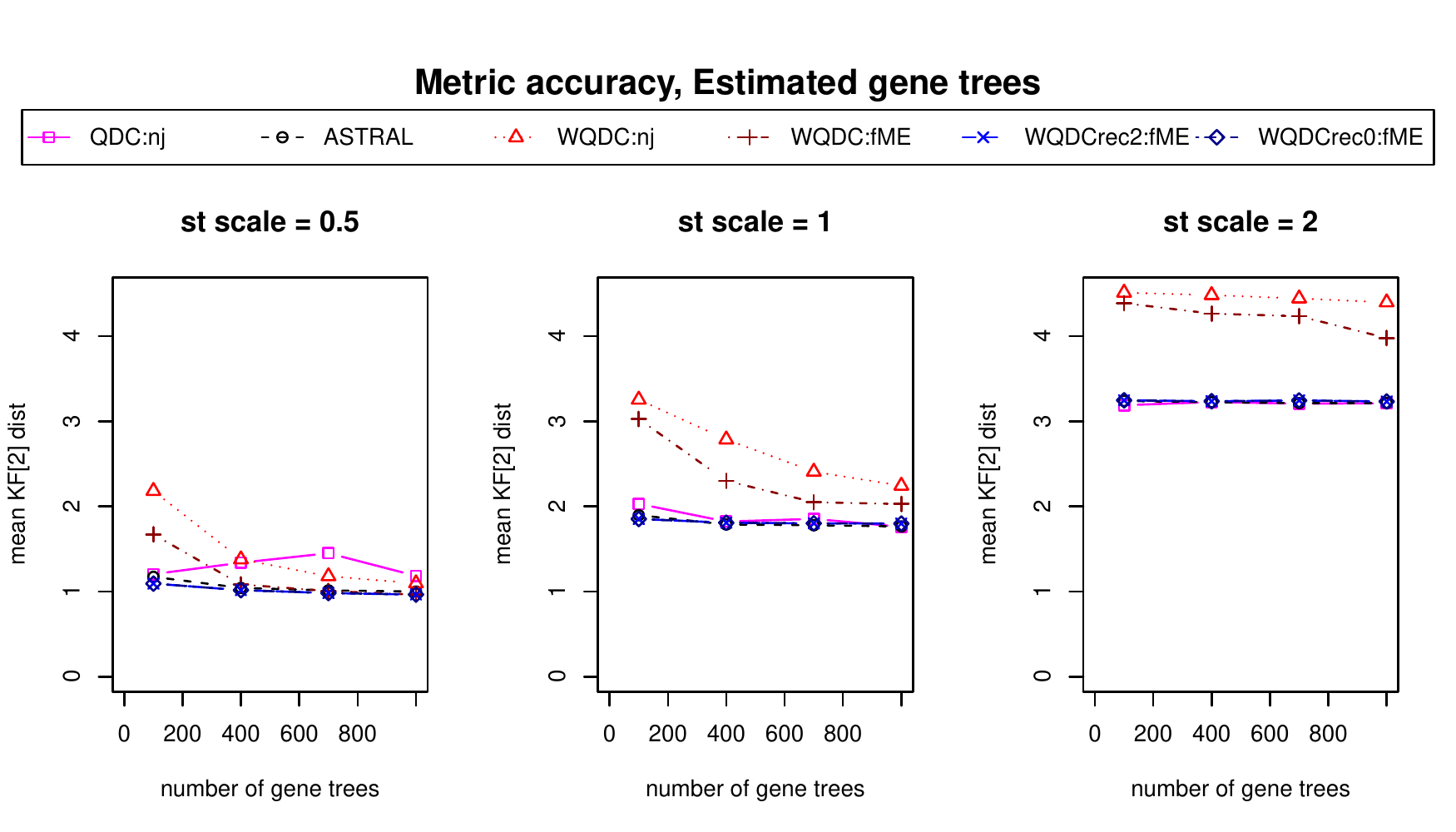}
		\caption{Simulation results on accuracy of  methods  of inference of species trees from gene trees inferred from sequences simulated on trees sampled under the MSC.}\label{fig:AccGTinferred}
\end{figure}

For gene trees sampled from the MSC, with no inference error, 
Figure \ref{fig:AccMSCsample} indicates that WQDC, with either distance method, has considerably poorer topological  and metric accuracy than the other methods used.
While Figure  \ref{fig:AccGTinferred} shows similar results for the methods applied to inferred gene trees, the gap in performance between these methods and others is narrowed.
The recursive WQDC, with $L=0$ or $2$ offers a clear improvement over non-recursive WQDC in all situations. This suggests that the source of the poor performance of the non-recursive WQDC is indeed the poor estimation of long edge lengths, as the recursive algorithm operates in such a way that after splits for such edges are put into the tree being inferred, the length of those edges no longer influences future steps. Finally, since there is no substantial difference in the performance
of the recursive WQDC for $L=0$ and $L=2$, it appears only long edges degrade performance. Since larger values of $L$ reduce running time, this can have an impact for practical use.

When compared to QDC or ASTRAL, the recursive WQDC's performance is usually worse.  For topological accuracy, the normalized RF distance is, however, generally less than the $0.037$ a single NNI move produces for a 30-taxon tree, so the difference is not great. Interestingly, for metric accuracy, recursive WQDC often matches the best performing algorithm.

Nonetheless, on this one set of simulations  ASTRID gives the best topological and metric accuracy among all these quartet-based method. This suggests that if either variant of WQDC is to be useful for empirical inference of species trees, additional development will be needed. We note that while its unweighted analog QDC also is slightly outperformed by ASTRID, it nonetheless serves as a crucial building block to the NANUQ algorithm for network inference \cite{ABR19}, which does have several practical advantages over other network inference methods. There may be similar roles for WQDC.

\section*{Acknowledgements}
This work was supported by the National Institutes of Health grant R01 GM117590, awarded under the  Joint DMS/NIGMS Initiative to Support Research at the Interface of the Biological and Mathematical Sciences.

\vskip 2.in
\bibliographystyle{alpha}      
\bibliography{WeightedQuartet}

\end{document}